\newcommand{\bbc}{\mathbb{C}}
\newcommand{\bbn}{\mathbb{N}}
\newcommand{\bbz}{\mathbb{Z}}
\newcommand{\bbh}{\mathbb{H}}
\newcommand{\cala}{\mathcal{A}}
\newcommand{\pd}[1]{\frac{\partial}{\partial #1}}
\newcommand{\pds}[2]{\left(\pd{#1}\right)^{#2}}
\newcommand{\Y}{\mathbf{Y}}
\newcommand{\A}{\mathbf{A}}
\newcommand{\kuohao}[1]{\left(#1\right)}
\newcommand{\vol}[1]{\frac{- dz_{#1} \wedge d\bar{z}_{#1}}{2i \op{Im}\tau}}
\newcommand{\op}{\operatorname}
\theoremstyle{plain}
\newtheorem{thm}{Theorem}[section]
\newtheorem{thm-defn}{Theorem/Definition}[section]
\newtheorem{lem}[thm]{Lemma}
\newtheorem{lem-defn}[thm]{Lemma/Definition}
\newtheorem{prop}[thm]{Proposition}
\newtheorem{cor}[thm]{Corollary}
\theoremstyle{definition}
\newtheorem{defn}[thm]{Definition}
\newtheorem{eg}[thm]{Example}
\theoremstyle{remark}
\newtheorem{rmk}[thm]{Remark}
\def\Xint#1{\mathchoice
 {\XXint\displaystyle\textstyle{#1}}%
 {\XXint\textstyle\scriptstyle{#1}}%
 {\XXint\scriptstyle\scriptscriptstyle{#1}}%
 {\XXint\scriptscriptstyle\scriptscriptstyle{#1}}%
 \!\int}
\def\XXint#1#2#3{{\setbox0=\hbox{$#1{#2#3}{\int}$}
  \vcenter{\hbox{$#2#3$}}\kern-.5\wd0}}
\tikzset{
       modal/.style={>=stealth',shorten >=1pt,shorten <=1pt,auto,
                     node distance=1.5cm,semithick},
       world/.style={circle,draw,minimum size=1cm,fill=gray!15},
       point/.style={circle,draw,fill=black,inner sep=0.5mm},
       reflexive/.style={->,in=120,out=60,loop,looseness=#1},
       reflexive/.default={5},
       reflexive point/.style={->,in=135,out=45,loop,looseness=#1},
       reflexive point/.default={25},
       }
\begin{document}
		
	\title{\mbox{\bf On the Feynman Graph Integrals on Elliptic Curves} }
	\author{Xiaoxiao Yang}
	\date{}
	\maketitle
	
	\begin{abstract}
		We introduce the regularized integrals for decorated graphs on elliptic curves, which produces an almost holomorphic function on upper half plane. Then we give the graph version of holomorphic anomaly equation to study the anti-holomorphic dependence of the graph integral. Finally, we give a general method to compute the graph integral.
	\end{abstract}
	
	\section{Introduction}

We introduce the regularized integral for decorated graphs on elliptic curves (see Definition \ref{Definition of R.I})
\begin{equation} \label{Formula of R.I}
	W(\Gamma,n) = \Xint-_{E_{\tau}^m} \Phi_{(\Gamma,n)} \op{vol} \cdot \prod_{e\in L(\Gamma)}W_{n_e},
\end{equation}
 which produces an almost holomorphic\cite{kaneko1995generalized} function on upper half plane $\bbh$. In fact, if a function $\Phi(z;\tau)$ on $\bbc^m \times \bbh$ is elliptic, modular of weight $k$, and smooth away form diagonals, with possibly holomorphic poles on diagonals, then $f(\tau) := \Xint-_{E_{\tau}^m} \Phi(z;\tau) \op{vol} \in \bbc[E_4,E_6][\widehat{E}_2]$ is almost holomorphic and modular of weight $k$ on the upper half plane $\bbh$ \cite{li2021regularized}. 

The anti-holomorphic dependence of $f$ is of polynomial in $\Y = -\frac{\pi}{\op{Im}\tau}$, which satisfies the holomorphic anomaly equation \cite{li2023regularized}
\begin{equation} \label{Holomorphic anomaly}
	\partial_{\Y} \Xint-_{E_{\tau}^n} \Psi = \Xint-_{E_{\tau}^n} \partial_{\Y}\Psi - \sum_{a<b} \Xint-_{E_{\tau}^{n-1}}\op{Res}_{z_a=z_b}\kuohao{(z_a-z_b)\Psi dz_a}.
\end{equation}

Let $G$ be the set of decorated graphs, $V(G)$ be the $\bbc$-linear space spanned by $G$, $W \colon V(G) \to \bbc[E_4,E_6][\widehat{E}_2]$ be the linear map induced by \eqref{Formula of R.I}. Let 
\[\delta(\Gamma,n) = \sum_{e \in E \cup L \atop n_e=0}(\Gamma \setminus e,n) - \frac{1}{2}\sum_{v,w \in V \atop v \neq w} \op{Res}_{w}^{(v)}[1](\Gamma,n)\]
be a differential operator on $V(G)$. Here $\Gamma \setminus e$ is the graph by deleting $e$, $\op{Res}_{w}^{(v)}[k](\Gamma,n)$ is the $k$-shifted residual graph with one vertex $v$ collapsing to another vertex $w$ in $\Gamma$ (see Definition \ref{Definition of k-shifted residual graph}). In this paper, we discuss the graph version of holomorphic anomaly equation:

\begin{thm}[= Theorem \ref{Main Theorem}]
	The following diagram commutes.
	\[\begin{tikzcd}
	V(G) \arrow[r,"W"] \arrow[d,"\delta"] &  \bbc[E_4,E_6] \arrow[d,"\partial_{\Y}"] [\widehat{E}_2]\\
	V(G) \arrow[r,"W"] & \bbc[E_4,E_6][\widehat{E}_2].
\end{tikzcd}\]
\end{thm}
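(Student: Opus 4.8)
The plan is to exploit $\bbc$-linearity to reduce the claim to a single generator, and then to feed the integrand representing $W(\Gamma,n)$ into the analytic holomorphic anomaly equation \eqref{Holomorphic anomaly}. Since both $W$ and $\delta$ are $\bbc$-linear, it suffices to prove $\partial_{\Y} W(\Gamma,n) = W(\delta(\Gamma,n))$ for each decorated graph $(\Gamma,n)$, where $m$ denotes its number of vertices as in \eqref{Formula of R.I}. Writing $\Psi := \Phi_{(\Gamma,n)}\op{vol}\cdot\prod_{e\in L(\Gamma)}W_{n_e}$ for the integrand of \eqref{Formula of R.I}, equation \eqref{Holomorphic anomaly} gives
\[
\partial_{\Y} W(\Gamma,n) = \Xint-_{E_{\tau}^{m}} \partial_{\Y}\Psi \;-\; \sum_{v<w}\Xint-_{E_{\tau}^{m-1}}\op{Res}_{z_v=z_w}\kuohao{(z_v-z_w)\,\Psi\,dz_v},
\]
for a fixed ordering of the vertices, and the whole proof amounts to matching the two terms on the right with the two summands defining $\delta$.

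For the bulk term I would use that the $\Y$-dependence of $\Psi$ is carried entirely by the propagators decorating the edges and legs, so by the Leibniz rule $\partial_{\Y}\Psi$ is a sum over $e\in E\cup L$ of the integrand with the factor $W_{n_e}$ replaced by $\partial_{\Y}W_{n_e}$. Here I would invoke the behaviour of the propagators under $\partial_{\Y}$ (following from their explicit form): $\partial_{\Y}W_{n_e}$ vanishes for $n_e\ge 1$, while $\partial_{\Y}W_{0}$ is the constant that converts the factor $W_0$ into the absence of that edge. Removing the undecorated factor is precisely the operation of erasing the edge, so the bulk term collapses to $\sum_{e\in E\cup L,\,n_e=0}W(\Gamma\setminus e,n)$, which is exactly $W$ applied to the first summand of $\delta(\Gamma,n)$.

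For the residue term I would analyze, for each pair of distinct vertices, the poles of $\Psi$ along the diagonal $z_v=z_w$ produced by the propagators on the edges joining $v$ and $w$. The factor $(z_v-z_w)$ together with $\op{Res}_{z_v=z_w}(\,\cdot\,dz_v)$ extracts a specific Laurent coefficient of this product of propagators; the claim I would verify is that the resulting integrand over $E_{\tau}^{m-1}$ is exactly the one attached to the $1$-shifted residual graph $\op{Res}_w^{(v)}[1](\Gamma,n)$ of Definition \ref{Definition of k-shifted residual graph}, obtained by collapsing $v$ onto $w$, reattaching the half-edges formerly at $v$, and shifting their decorations by the amount forced by the single power of $(z_v-z_w)$. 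Finally I would reconcile the ranges of summation: the change of variables $u=z_v-z_w$ shows the residue contribution is symmetric under interchanging the two collapsing vertices, so the analytic sum $\sum_{v<w}$ over unordered pairs matches $\tfrac12\sum_{v\ne w}$ over ordered pairs, which also fixes the sign and the factor $\tfrac12$ appearing in $\delta$. Summing these identifications over edges and over pairs then yields $W(\delta(\Gamma,n))$.

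I expect the residue term to be the main obstacle. Whereas the bulk term differentiates the integrand one propagator at a time, the residue along $z_v=z_w$ mixes all the propagators incident to the pair $\{v,w\}$--- several edges may join $v$ and $w$, and both vertices may carry further edges to third vertices--- so one must track how the Laurent coefficients of a product of propagators assemble into the decorations on the merged vertex and confirm, uniformly over every incidence pattern, that the shift produced is exactly the ``$[1]$'' of the $1$-shifted residual graph. Pinning down this combinatorial translation, together with the symmetry factor, is the crux; once it is in place the diagram commutes.
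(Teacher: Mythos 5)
Your proposal is correct and follows essentially the same route as the paper: reduce by linearity to a single graph, apply the analytic holomorphic anomaly equation to the integrand, identify the bulk term with the edge/loop deletions via $\partial_{\Y}\widehat{P}=1$, $\partial_{\Y}W_0=1$ (and vanishing on positively decorated factors), and identify the diagonal residues with the $1$-shifted residual graphs, using the symmetry of $\op{Res}_{z_v=z_w}((z_v-z_w)\Psi\,dz_v)$ under swapping $v$ and $w$ to account for the factor $\tfrac12$. The only cosmetic difference is that you keep the loop weights $\prod_{e\in L}W_{n_e}$ inside the integrand while the paper factors them out and applies the Leibniz rule separately, and the combinatorial matching of the residue with $\op{Res}_w^{(v)}[1](\Gamma,n)$ that you correctly flag as the crux is exactly what the paper isolates in its Lemmas \ref{Summation of residual graph terms} and \ref{Second part of differential}.
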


In section 4, we use the idea of iterated integral to give a general algorithm to compute the regularized integral on graphs:
\begin{thm}[=Theorem \ref{Main Thm 2}] \label{Thm2 in intro}
	Let $(\Gamma,n)$ be a decorated graph, and let $v$ be a vertex in $\Gamma$. Then
	\[W(\Gamma,n) = \sum_{w \in N(v)} W(\op{Res}_{w}^{(v)}[0](\bar\delta^{-1}(\Gamma,n))).\]
\end{thm}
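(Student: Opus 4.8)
The plan is to compute $W(\Gamma,n)$ by an iterated integral in which the variable $z_v$ attached to the chosen vertex $v$ is integrated out first, thereby reducing to a decorated graph with one fewer vertex. Using the Fubini property of the regularized integral (so that $W$ may be evaluated as an iterated integral, cf. \cite{li2021regularized}), I would split the $m=|V(\Gamma)|$-fold integral as
\[
W(\Gamma,n) = \Xint-_{E_{\tau}^{m-1}}\left(\Xint-_{E_{\tau}}\Phi_{(\Gamma,n)}\,\op{vol}_{z_v}\right)\op{vol}_{\mathrm{rest}}.
\]
The theorem then reduces to the single-variable statement that integrating out $z_v$ sends the integrand $\Phi_{(\Gamma,n)}$ to $\sum_{w\in N(v)}\Phi_{\op{Res}_w^{(v)}[0]\bar\delta^{-1}(\Gamma,n)}$ as a function of the remaining variables; integrating the remaining $m-1$ variables then produces the right-hand side, since $W$ is linear and each collapsed integrand is by definition the integrand of the corresponding graph.

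The heart of the argument is this single-variable reduction. Here I would apply the holomorphic anomaly equation \eqref{Holomorphic anomaly} with $n=1$ in the variable $z_v$, treating all other $z_w$ as parameters. Writing $F := \Xint-_{E_{\tau}}\Phi_{(\Gamma,n)}\,\op{vol}_{z_v}$, this gives
\[
\partial_{\Y}F = \Xint-_{E_{\tau}}\partial_{\Y}\Phi_{(\Gamma,n)}\,\op{vol}_{z_v} - \sum_{w\in N(v)}\op{Res}_{z_v=z_w}\kuohao{(z_v-z_w)\,\Phi_{(\Gamma,n)}\,dz_v}.
\]
The next step is to read off both contributions graph-theoretically: the residue at $z_v=z_w$ sets $z_v=z_w$ and hence collapses $v$ onto its neighbor $w$, producing an $(m-1)$-vertex integrand, while differentiation in $\Y$ deletes exactly the edges at $v$ carrying decoration $n_e=0$ (and keeps $v$). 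Consequently $F$ satisfies a first-order recursion in $\Y$ whose homogeneous part is the edge-deletion operator at $v$ — the localization $\bar\delta$ of $\delta$ that I expect, retaining only the $n_e=0$ deletions incident to $v$ — and whose inhomogeneous part is the collapse.

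It remains to solve this recursion. Because $W(\Gamma,n)$ lies in $\bbc[E_4,E_6][\widehat{E}_2]$ and is therefore a polynomial in $\Y$ of bounded degree, and because each edge-deletion strictly lowers a nonnegative grading on $V(G)$ (for instance the total $\Y$-weight of the edges incident to $v$), the operator $\bar\delta$ is unipotent, so $\bar\delta^{-1}=\sum_{k\ge 0}(\op{id}-\bar\delta)^k$ is a well-defined finite sum. Integrating the recursion in $\Y$ and fixing the integration constant by the holomorphic limit $\Y\to 0$ yields the closed form $F=\sum_{w\in N(v)}\Phi_{\op{Res}_w^{(v)}[0]\bar\delta^{-1}(\Gamma,n)}$; the drop from the $1$-shift appearing in \eqref{Holomorphic anomaly} to the $0$-shift in the conclusion emerges from the $\Y$-integration, reflecting the relation between $\partial_{\Y}$ and the orders of the propagators attached to $v$.

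I expect the main obstacle to be the combinatorial bookkeeping in this final step: verifying that the solution of the $\Y$-recursion is organized exactly as $\op{Res}_w^{(v)}[0]\bar\delta^{-1}$, that the collapse sum restricts to $w\in N(v)$ with the correct shift, and that $\bar\delta$ is genuinely unipotent so that the Neumann series terminates. A secondary technical point is the careful use of the single-variable anomaly equation when the parameters $z_w$ are allowed to collide and when $\Phi_{(\Gamma,n)}$ is only quasi-elliptic in $z_v$ because of its $\Y$-dependence; one must check that the regularized integral, the residues, and the holomorphic-limit initial condition interact precisely as the above argument requires.
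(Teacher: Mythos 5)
Your opening move --- use the Fubini property to peel off the $z_v$-integration and reduce to a one-vertex statement --- is exactly how the paper begins. But the mechanism you propose for evaluating $F=\Xint-_{E_\tau}\Phi_{(\Gamma,n)}\op{vol}_{z_v}$ is not the paper's, and I do not think it can be made to work. You apply the holomorphic anomaly equation to get $\partial_{\Y}F$ and then try to recover $F$ by integrating in $\Y$ and ``fixing the integration constant by the holomorphic limit $\Y\to 0$.'' That integration constant is precisely the holomorphic (ordered $A$-cycle) part of the integral, i.e.\ the entire nontrivial content of the computation; the anomaly equation only controls the $\Y$-dependence and gives you no handle on it. So your recursion does not close: you would need an independent evaluation of the holomorphic limit, which is what the theorem is supposed to provide. (A secondary issue: the anomaly equation in the paper is stated for $n\ge 2$ and for the full multivariable integral, not variable-by-variable with the other $z_w$ as parameters.)

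You have also misidentified the operator $\bar\delta$. It is not a ``localization of $\delta$ retaining the $n_e=0$ deletions at $v$,'' and it is not unipotent with a Neumann-series inverse. In the paper, $\bar\delta_v$ is the graph avatar of $\partial_{\bar z_v}$: the decoration set is enlarged to allow $n_e=-1$, with $\partial^{-1}\widehat P:=-\widehat Z$, and $\bar\delta_v$ deletes the $(-1)$-decorated edges at $v$ (with signs), because $\partial_{\bar z_v}\widehat Z(z_v-z_w)=-\pi/\op{Im}\tau$ is constant. The notation $\bar\delta_v^{-1}(\Gamma,n)$ denotes a chosen $\bar\partial_{z_v}$-antiderivative of the integrand, constructed explicitly in Lemma \ref{delta bar inverse} by multiplying in powers of $\widehat Z_{v,w}$ (in the simplest case, by attaching one extra edge of decoration $-1$ at $v$); it is not unique and is certainly not $\sum_k(\op{id}-\bar\delta)^k$. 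With that antiderivative in hand, the paper's actual mechanism is Stokes/integration by parts for the regularized integral: $\Xint-_{E_\tau}\bar\partial\xi=-2\pi i\op{Res}(\xi)$ converts the $z_v$-integral into $\op{Res}(\Phi_{\bar\delta_v^{-1}(\Gamma,n)}dz_v)$ (Corollary \ref{Corollary in Section 4}), and Lemma \ref{0-shifted residual graph lemma} identifies that residue with $\sum_{w\in N(v)}\Psi_{\op{Res}_w^{(v)}[0](\cdot)}$ --- which is also where the $0$-shift (residue of $\Psi\,dz_v$ rather than of $(z_v-z_w)\Psi\,dz_v$) comes from, not from any $\Y$-integration. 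To repair your argument you would need to replace the anomaly-equation step by this $\bar\partial$-antiderivative/residue step.
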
 

Here $\op{Res}_{w}^{(v)}[0](\bar\delta^{-1}(\Gamma,n))$ is a graph with $|\Gamma|-1$ vertices, $N(v)$ is the neighborhood of $v$ in $\Gamma$ (See Lemma \ref{0-shifted residual graph lemma}). And we can use Theorem \ref{Thm2 in intro} iteratively to compute the graph integral. We will also compute some examples in section 4.

The regularized integrals is closely related to the so-called ordered $A$-cycle integrals studied in the literature \cite{oberdieck2018holomorphic,dijkgraaf1995mirror}. In fact, the regularized integral is exactly the modular completion (in the sense of Kaneko–Zagier \cite{kaneko1995generalized}) of the average over orderings of the ordered $A$-cycle integrals \cite{li2021regularized} .

\paragraph{Acknowledgements} I greatly appreciate Si Li for discussions on this work.

\nocite{dijkgraaf1995mirror}

\nocite{martin1966complex}

\nocite{miyake2006modular}

\nocite{kaneko1995generalized}

\nocite{oberdieck2018holomorphic}

\nocite{mnev2017lectures}

	\section{Regularized Integral on Elliptic Curves}
In this section, we will briefly recall the definition and some important properties of regularized integral. More details about regularized integral or proof of properties mentioned in this section can be found in \cite{li2021regularized} and \cite{li2023regularized}. 

\paragraph{Regularized Integrals}
	Let $\Sigma$ be a compact Riemann surface, possibly with boundary $\partial \Sigma$. 
	And let $D$ be a finite subset of $\Sigma$ which does not meet the boundary.
	Let $\cala^{p,q}(\Sigma,*D)$ be the set of $(p,q)$-forms on $\Sigma$ which are smooth on $\Sigma \setminus D$ but may admit \textit{holomorphic poles} of arbitrary order along $D$. 
	More precisely, for $\omega \in \cala^{p,q}(\Sigma,*D)$ and $p \in D$, let $z$ be a local holomorphic coordinate around $p$ with $z(p) = 0$, then there exists an integer $n \geq 0$ and a smooth $(p,q)$-form $\alpha$ around $0 \in \bbc$ such that $\omega = \alpha /z^{n}$ in a neighborhood of $0$. 
	Let $\cala^{1,q}(\Sigma, \log D) \subset \cala^{1,q}(\Sigma,*D)$ be the set of \emph{logarithm $(1,q)$-forms} along $D$ such that the elements in $\cala^{1,q}(\Sigma,\log D)$ have holomorphic poles with order $ \leq 1$.	
	\begin{lem}[= Lemma 2.1 in \cite{li2021regularized}]\label{Regularized Integral}
		Let $\omega \in \cala^{1,q}(\Sigma,*D)$, there exists $\alpha \in \cala^{1,q}(\Sigma,\log D)$ and $\beta \in \cala^{0,q}(\Sigma,*D)$, such that
		\[\omega = \alpha + \partial \beta.\]
		Moreover, for $q = 1$, if $\omega = \alpha' + \partial \beta'$ is another decomposition of $\omega$, then 
		\[\int_{\Sigma} \alpha + \int_{\partial \Sigma} \beta = \int_{\Sigma} \alpha' + \int_{\partial\Sigma} \beta'. \]
	\end{lem}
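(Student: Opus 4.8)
The plan is to prove the two assertions separately: existence of the decomposition by a local pole-reduction induction, and the well-definedness (for $q=1$) by a Stokes-theorem argument with the poles excised.

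For existence, I would induct on the maximal holomorphic pole order $n$ of $\omega$ along the finite set $D$. Fix $p \in D$ with local coordinate $z$ and write $\omega = z^{-n}\sigma$ with $\sigma$ a smooth $(1,q)$-form; since $\Sigma$ has complex dimension one, $\sigma = g\,dz \wedge \mu$ where $\mu$ is the relevant antiholomorphic generator ($\mu = 1$ if $q=0$, $\mu = d\bar z$ if $q=1$) and $g$ is smooth. The key computation is that
\[\partial\!\left(\frac{-1}{(n-1)}\,\frac{g}{z^{n-1}}\,\mu\right) = \frac{g}{z^{n}}\,dz\wedge\mu \;-\; \frac{1}{(n-1)}\,\frac{\partial_z g}{z^{n-1}}\,dz\wedge\mu,\]
so subtracting this $\partial$-exact term lowers the pole order from $n$ to $n-1$. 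To globalize, I would multiply the local primitive by a cutoff $\rho$ equal to $1$ near $p$ and supported in the chart; the only new contribution is $\partial\rho \wedge(\cdots)$, which is supported where $\rho$ is locally constant, hence is a genuinely smooth $(1,q)$-form and may be absorbed into the logarithmic part. Iterating over the finitely many points of $D$ and down to pole order $1$ yields $\omega = \alpha + \partial\beta$ with $\alpha \in \cala^{1,q}(\Sigma,\log D)$ and $\beta \in \cala^{0,q}(\Sigma,*D)$.

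For well-definedness when $q=1$, I would set $\gamma := \alpha - \alpha' = \partial\eta$ with $\eta := \beta' - \beta \in \cala^{0,1}(\Sigma,*D)$, reducing the claim to $\int_\Sigma \gamma = \int_{\partial\Sigma}\eta$. Because $\eta$ is a $(0,1)$-form on a Riemann surface, $\bar\partial\eta=0$ and hence $d\eta = \partial\eta = \gamma$. Excising $\epsilon$-disks $B_\epsilon(p)$ around each $p\in D$ and applying Stokes on $\Sigma_\epsilon := \Sigma \setminus \bigcup_p B_\epsilon(p)$ gives
\[\int_{\Sigma_\epsilon}\gamma = \int_{\partial\Sigma}\eta - \sum_{p\in D}\int_{|z|=\epsilon}\eta,\]
after which I pass to the limit $\epsilon \to 0$. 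First, since $\gamma$ is logarithmic its local model is $z^{-1}(\text{smooth})\,dz\wedge d\bar z$, which is bounded in polar coordinates; thus $\gamma$ is absolutely integrable and $\int_{\Sigma_\epsilon}\gamma \to \int_\Sigma\gamma$. Second, writing $\eta = z^{-k}h\,d\bar z$ near $p$ and expanding $h$ in a Taylor series with remainder, the parametrization $z = \epsilon e^{i\theta}$ shows each surviving circle term is proportional to $\epsilon^{2b+2}$ with $b\geq 0$, so $\int_{|z|=\epsilon}\eta \to 0$. Combining the two limits gives $\int_\Sigma\gamma = \int_{\partial\Sigma}\eta$, which is exactly the asserted independence of the decomposition.

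The main obstacle I anticipate is this limiting step in the well-definedness part: one must simultaneously control $\int_{\Sigma_\epsilon}\gamma$, using that $\gamma$ is logarithmic (pole order $\leq 1$) precisely to guarantee integrability, and the boundary circle integrals $\int_{|z|=\epsilon}\eta$, where $\eta$ may a priori carry high-order poles. The clean cancellation hinges on the fact that the residue-type boundary contribution always appears with a strictly positive power of $\epsilon$; verifying this through the Fourier/Taylor expansion on each circle, and confirming genuine integrability of the logarithmic forms, is the technical heart of the lemma. By contrast, the existence half is the standard pole-reduction recursion and should present no conceptual difficulty beyond the bookkeeping with cutoffs.
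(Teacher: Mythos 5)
Your argument is correct, but note that the paper itself gives no proof of this lemma: it is quoted verbatim as Lemma 2.1 of the cited reference of Li--Zhou, and your pole-reduction induction for existence plus the excision/Stokes limit for well-definedness is essentially the proof given there, so there is nothing genuinely different to compare. The only blemish is a wording slip in the existence step: the correction term $\partial\rho\wedge(\cdots)$ is supported where $\rho$ is \emph{not} locally constant (i.e.\ on the annulus away from $p$ and from the rest of $D$), which is exactly why it is smooth and can be absorbed into the logarithmic part; the mathematics is otherwise sound, including the key estimate that each surviving circle term is $O(\varepsilon^{2b+2})$.
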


	Not every element in $\cala^{1,1}(\Sigma,*D)$ is integrable, but by Lemma \ref{Regularized Integral}, we can define the \emph{regularized integral} on $\cala^{1,1}(\Sigma,*D)$ by
	\[\Xint-_{\Sigma}\omega := \int_{\Sigma}\alpha + \int_{\partial \Sigma} \beta,\]
	with the commutative diagram 
	\[\begin{tikzcd}
		\cala^{1,1}(\Sigma,\log D) \arrow[rr, hook] \arrow[ddr,"\int_{\Sigma}"swap]& & \cala^{1,1}(\Sigma,*D) \arrow[ldd, "\Xint-_{\Sigma}"] \\
		& & \\
		 & \bbc & 
	\end{tikzcd}.\]

	\begin{prop} \label{Properties of regularized integral}
		We list some important properties of regularized integral here, which will be used in following sections.
		\begin{itemize}
			\item[$1.$] Let $\Sigma$ be a compact Riemann surface without boundary, and $D \subset \Sigma$ be a finite subset. Let $\xi \in \cala^{1,0}(\Sigma,*D)$ and $\eta \in \cala^{0,1}(\Sigma,*D)$. Then
		\[\Xint-_{\Sigma} \bar{\partial} \xi = -2\pi i \op{Res}_{\Sigma}(\xi), \quad \Xint-_{\Sigma} \partial \eta = 0.\]
		Here
		\[\op{Res}_{\Sigma}(\xi) = \sum_{p \in D}\op{Res}_{p}(\xi), \quad \text{with} \quad  \op{Res}_{p}(\xi) = \frac{1}{2\pi i} \lim_{\varepsilon \to 0}\int_{|z|=\varepsilon}\xi,\] 
		and $z$ is a local holomorphic coordinate around $p$ with $z(p)=0$.
		
		\item[$2.$] If $f$ is smooth around $0 \in \bbc$, and $n \in \bbn_{+}$, then
		\[\op{Res}_{z=0}\kuohao{\frac{f(z)}{z^{n}}dz} = \frac{1}{(n-1)!} \partial_{z}^{n-1}f(0).\]
		If $f$ is a smooth function, with a holomorphic pole on $0$, and $l \in \bbn_{+}$, then
		\[\op{Res}_{z=0}\kuohao{\bar{z}^{l}f(z)dz} = 0.\]
				
		\item[$3.$] The regularized integral can be generalized to integrals on the configuration spaces $\op{Conf_n}(\Sigma) = \Sigma^n - \Delta_n$, where $\Delta_n$ is the big diagonal, and one has
		\[\Xint-_{\Sigma} \colon \cala^{n,n}(\Sigma^n, *\Delta_n) \to \cala^{n-1,n-1}(\Sigma^{n-1},*\Delta_{n-1}).\] 
		\item[$4.$] The iterated integral
		\[\Xint-_{\Sigma_{\sigma(1)}} \cdots \Xint-_{\Sigma_{\sigma(n)}} \colon \cala^{n,n}(\Sigma^n,*\Delta_n) \to \bbc, \quad \sigma \in S_n\]
		does not depend on the choice of $\sigma$. Here $S_n$ is the permutation group.
		\end{itemize}
	\end{prop}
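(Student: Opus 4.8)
The plan is to dispatch the two local statements (items 1 and 2) by explicit computation and then to build the configuration-space statements (items 3 and 4) on top of them. For item 2 I would expand $f(z,\bar z) = \sum_{a,b\ge 0} c_{ab}\,z^a\bar z^b$ with $c_{ab} = \frac{1}{a!\,b!}\partial_z^a\partial_{\bar z}^b f(0)$, parametrize the circle by $z = \varepsilon e^{i\theta}$, and evaluate $\frac{1}{2\pi i}\int_{|z|=\varepsilon} z^{-n}f\,dz$ directly: each monomial contributes a term proportional to $\varepsilon^{2b}$ whose angular integral vanishes unless $a = n-1+b$, so only $b=0$ survives the limit $\varepsilon\to 0$ and returns $c_{n-1,0} = \frac{1}{(n-1)!}\partial_z^{n-1}f(0)$. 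The second identity is the identical bookkeeping: the extra factor $\bar z^l$ pushes every surviving term to a power $\varepsilon^{2l+2b}$ with $l\ge 1$, so the residue vanishes. For item 1 I would invoke the decomposition $\omega = \alpha + \partial\beta$ of Lemma \ref{Regularized Integral} and Stokes' theorem on the excised surface $\Sigma_\varepsilon = \Sigma\setminus\bigcup_{p\in D}\{|z|\le\varepsilon\}$. Since $\Sigma$ has complex dimension one, $\xi$ is $\partial$-closed and every $(0,1)$-form is $\bar\partial$-closed; the second identity is then immediate because $(\alpha,\beta)=(0,\eta)$ is a legal decomposition of $\partial\eta$, whence $\Xint-_\Sigma\partial\eta = \int_{\partial\Sigma}\eta = 0$ on the closed surface. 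For the first identity I would use $\bar\partial\xi = d\xi$, write $\int_\Sigma\alpha = \lim_{\varepsilon\to 0}\int_{\Sigma_\varepsilon}(\bar\partial\xi - \partial\beta)$, and convert both terms to circle integrals by Stokes (using $\partial\beta = d\beta$); the orientation of $\partial\Sigma_\varepsilon$ turns $\int_{|z|=\varepsilon}\xi \to 2\pi i\,\op{Res}_p(\xi)$ into the claimed $-2\pi i\,\op{Res}_\Sigma(\xi)$, while the $\beta$-boundary term vanishes in the limit by the same $\varepsilon$-power count as item 2.

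For item 3 I would freeze $(z_1,\dots,z_{n-1})$ in the complement of the diagonals and read the integrand as a $z_n$-family of forms in $\cala^{1,1}(\Sigma,*D)$ with $D = \{z_1,\dots,z_{n-1}\}$, to which the fiberwise regularized integral $\Xint-_\Sigma$ applies. Smoothness of the result off $\Delta_{n-1}$ follows from differentiating under the regularized integral, so the real content is the local behavior as two of the remaining points $z_a, z_b$ collide: one must show the fiber integral develops at worst a \emph{holomorphic} pole along $\{z_a = z_b\}$. I expect this to be the principal obstacle. I would attack it by localizing near $\{z_a=z_b\}$, splitting off the holomorphic principal parts of the integrand along the fiber divisors $z_n = z_a$ and $z_n = z_b$, and using item 1 to express the regularization as residues at these points; these residues are meromorphic in the collision coordinate $z_a - z_b$ with poles only of holomorphic type, which is exactly the required pole structure.

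Finally, for item 4 I would reduce order-independence over $S_n$ to the commutation of two adjacent integrations, and then freeze all other variables so that it suffices to prove $\Xint-_{\Sigma_1}\Xint-_{\Sigma_2}\omega = \Xint-_{\Sigma_2}\Xint-_{\Sigma_1}\omega$ for $\omega\in\cala^{2,2}(\Sigma^2,*\Delta_2)$. Away from $\Delta_2$ ordinary Fubini applies, so after excising an $\varepsilon$-tube around the diagonal the two orders agree; the discrepancy between the regularized integrals and these excised ones is localized on $\{z_1=z_2\}$ and is computed by the residue/Stokes mechanism of item 1, and the task is to check that the two diagonal contributions coincide. This, together with the pole-propagation estimate of item 3, is where I expect the genuine difficulty to lie: throughout, the recurring danger is that regularized integration does not commute with the naive $\varepsilon\to 0$ limits, so every exchange of integral and limit must be justified by isolating holomorphic principal parts and appealing to item 1 rather than to any convergence theorem.
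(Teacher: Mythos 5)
The paper offers no proof of this proposition: the section states explicitly that proofs of all properties listed there are to be found in \cite{li2021regularized} and \cite{li2023regularized}, so there is no internal argument to compare yours against and I can only judge the proposal on its own terms. Items 1 and 2 are correct and essentially complete. The circle-integral bookkeeping in item 2 is right; the one point you should add is that for a merely smooth $f$ the expansion $\sum_{a,b}c_{ab}z^a\bar z^b$ is Taylor's theorem with remainder rather than a convergent series, so one truncates at order $N\ge n-1$ and checks the remainder contributes $O(\varepsilon^{N+2-n})$ to the circle integral. For item 1, taking $(\alpha,\beta)=(0,\eta)$ as the decomposition of $\partial\eta$, and for $\bar\partial\xi$ using $\bar\partial\xi=d\xi$, $\partial\beta=d\beta$ on $\Sigma_\varepsilon$ together with Stokes, the orientation reversal on the excised circles, and the vanishing of $\lim_{\varepsilon\to0}\int_{|z|=\varepsilon}\beta$ by the same power count, is sound and is the standard argument.

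Items 3 and 4, however, are not proved; they are restated with a plausible plan attached, and you say as much yourself. For item 3 the entire content is the claim that the fiberwise regularized integral of an element of $\cala^{n,n}(\Sigma^n,*\Delta_n)$ is again smooth away from $\Delta_{n-1}$ with only \emph{holomorphic} poles along it. Your sentence ``these residues are meromorphic in the collision coordinate $z_a-z_b$ with poles only of holomorphic type'' is the theorem, not an argument for it: one must produce a splitting of the integrand near the colliding fiber divisors $\{z_n=z_a\}$ and $\{z_n=z_b\}$ that is uniform as $z_a\to z_b$, verify that the residue terms (which are $\partial_{z_n}$-derivatives of a smooth numerator evaluated at $z_n=z_a$, divided by powers of $z_a-z_b$) carry no antiholomorphic singularity, and show that the leftover regular part integrates to something smooth across $\{z_a=z_b\}$ --- the uniformity of the splitting being exactly the hard step. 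Item 4 is in worse shape: after reducing to $n=2$ and excising an $\varepsilon$-tube around the diagonal, ``the task is to check that the two diagonal contributions coincide'' is a restatement of the claim to be proved, and nothing in the plan computes either contribution. Both of these are substantive theorems in \cite{li2021regularized}; as written, your proposal establishes items 1 and 2 but leaves 3 and 4 open.
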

	
	\paragraph{Modularity of regularized integrals} Let $\Phi(z;\tau)$ be a function on $\bbc^{n} \times \bbh$, where $\bbh = \{\tau \in \bbc \colon \op{Im}\tau > 0\}$ is the upper half plane. The $\op{SL}(2,\bbz)$ action on $\bbc^n \times \bbh$ is given by
	\[\gamma(z;\tau) = (\gamma z;\gamma\tau) := \kuohao{\frac{z}{c\tau + d};\frac{a\tau+b}{c\tau+d}},\]
	where $\gamma = \begin{pmatrix}
		a & b\\
		c & d
	\end{pmatrix} \in  \op{SL}(2,\bbz).$ 
	\begin{itemize}
		\item $\Phi$ is \emph{elliptic} if for any $\lambda \in \Lambda_{\tau} = \{m + n\tau \mid m,n \in \bbz\}$,
	\[\Phi(z+\lambda;\tau) = \Phi(z;\tau).\] 
	\item $\Phi$ is \emph{modular with weight $k \in \bbz$} if 
	\[\Phi(\gamma z;\gamma \tau) = (c\tau+d)^{k}\Phi(z;\gamma), \quad \forall \gamma \in \op{SL}(2,\bbz).\]
	\end{itemize}
	
	If $\Phi$ is an elliptic function, then for each $\tau \in \bbh$, $\Phi(z;\tau)$ is well-defined on $E_{\tau}^{n}$, so we can consider the function 
	\[f(\tau) = \Xint-_{E_{\tau}^{n}} \Phi(z;\tau) \op{vol},\]
	where 
	\[\op{vol} = \prod_{j=1}^{n} \frac{i}{2 \op{Im}\tau}dz_j \wedge d\bar{z}_j\]
	is the volume form on $E_{\tau}^{n}$ with
	\[\int_{E_{\tau}} \frac{i}{2\op{Im}\tau}dz \wedge d\bar{z} = 1.\]
	\begin{thm}[= Theorem 3.4 in \cite{li2021regularized}]\label{thm about modularity}
	Let $\Phi(z;\tau)$ be a meromorphic elliptic function on $\bbc^n \times \bbh$ which is holomorphic away from diagonals and modular of weight $k$. Then
	\[\Xint-_{E_{\tau}^{n}} \Phi \op{vol} \in \bbc[E_4,E_6][\widehat{E}_2]\] 
	is modular of weight $k$ and almost holomorphic on $\bbh$, where $E_{k},k \geq 2$ are Eisenstein series, and $\widehat{E}_2 = E_2 - \frac{3}{\pi \op{Im}\tau}$.
	\end{thm}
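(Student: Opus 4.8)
The plan is to establish the two conclusions separately: weight-$k$ modular covariance by a single change of variables, and almost holomorphicity by an induction on $n$ that integrates out one coordinate at a time and bottoms out in an explicit one-variable computation producing $\widehat{E}_2$.

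\smallskip
\noindent\textbf{Step 1 (modularity).} First I would record the naturality of the regularized integral under biholomorphisms preserving the pole divisor. If $\varphi\colon \Sigma \to \Sigma'$ is a biholomorphism carrying $D$ to $D'$, then $\varphi^*$ sends $\cala^{1,q}(\Sigma',\log D')$ into $\cala^{1,q}(\Sigma,\log D)$ and commutes with $\partial$, so it intertwines the two decompositions $\omega=\alpha+\partial\beta$ of Lemma \ref{Regularized Integral}; hence $\Xint-_{\Sigma'}\omega=\Xint-_{\Sigma}\varphi^*\omega$. I would apply this to the isomorphism $E_\tau \xrightarrow{\sim} E_{\gamma\tau}$, $z\mapsto z/(c\tau+d)$, which acts diagonally on $E_\tau^n$ and therefore preserves $\Delta_n$. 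Since $\op{vol}$ is $\op{SL}(2,\bbz)$-invariant (because $dz\wedge d\bar z$ scales by $|c\tau+d|^{-2}$ while $\op{Im}(\gamma\tau)=\op{Im}\tau/|c\tau+d|^2$) and, by modularity of weight $k$, the pullback of $\Phi(\,\cdot\,;\gamma\tau)$ equals $(c\tau+d)^k\Phi(\,\cdot\,;\tau)$, naturality yields at once
\[f(\gamma\tau)=\Xint-_{E_{\gamma\tau}^n}\Phi(w;\gamma\tau)\op{vol}=(c\tau+d)^k\Xint-_{E_\tau^n}\Phi(z;\tau)\op{vol}=(c\tau+d)^kf(\tau).\]
Compatibility of this change of variables with the iterated structure of Properties $3$ and $4$ is automatic, the action being diagonal.

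\smallskip
\noindent\textbf{Step 2 (reduction to one variable).} For almost holomorphicity I would induct on $n$, integrating out one coordinate $z_n$ via the corestriction $\Xint-_{E_\tau}\colon \cala^{n,n}(E_\tau^n,*\Delta_n)\to \cala^{n-1,n-1}(E_\tau^{n-1},*\Delta_{n-1})$ of Property $3$, and using Property $4$ to guarantee order-independence. The correct inductive invariant is the class of integrands that are elliptic in each remaining variable, carry holomorphic poles only along the diagonals, are jointly modular of weight $k$, and depend on $1/\op{Im}\tau$ polynomially of bounded degree. I would show this class is preserved by one partial regularized integration and that the weight is unchanged, since $\op{vol}$ is invariant of weight $0$. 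At the end of the induction one is left with a function of $\tau$ alone that is modular of weight $k$ and a polynomial in $1/\op{Im}\tau$ with holomorphic coefficients; the structure theorem of Kaneko–Zagier then places it in $\bbc[E_4,E_6][\widehat{E}_2]$, provided the coefficients are holomorphic at the cusp, which I would read off from the $q$-expansions of the building blocks.

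\smallskip
\noindent\textbf{Step 3 (the one-variable evaluation).} The heart is the single integration. Fixing $z_1,\dots,z_{n-1}$, the integrand is a meromorphic elliptic function of $z_n$ with poles at the $z_j$, so I would expand it as its $z_n$-average plus a finite $\bbc$-combination of $\wp(z_n-z_j;\tau)$ and its $z_n$-derivatives, where the coefficients are themselves meromorphic elliptic functions of $z_1,\dots,z_{n-1}$ with poles along the remaining diagonals. Each $\wp^{(\ge 1)}(z_n-z_j)$ is $\partial_{z_n}$ of an elliptic function, so $\Xint-_{E_\tau}\wp^{(\ge 1)}\op{vol}=0$ by Property $1$; the average integrates to itself; and $\Xint-_{E_\tau}\wp(z_n-z_j;\tau)\op{vol}$ is a universal constant, which—being weight $2$ and almost holomorphic—must be a multiple of $\widehat{E}_2=E_2-\tfrac{3}{\pi\op{Im}\tau}$. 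This evaluation is precisely where the non-holomorphic term $1/\op{Im}\tau$ enters and restores the weight-$2$ modular covariance that the bare $E_2$ fails to have. Assembling, the result is again elliptic with poles on the remaining diagonals and almost holomorphic of weight $k$, closing the induction.

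\smallskip
\noindent\textbf{Main obstacle.} I expect the crux to be the closure of the inductive class under the partial integral: one must check that integrating $z_n$ preserves ellipticity and leaves only diagonal poles (the new poles must reorganize exactly onto the diagonals $z_j=z_k$ inherited from the expansion coefficients), that the degree in $1/\op{Im}\tau$ stays bounded, and that the $\widehat{E}_2$-contributions recombine into a genuinely modular object. Reconciling this explicit $1/\op{Im}\tau$-expansion with the weight-$k$ covariance of Step 1—so that the bookkeeping forces $\widehat{E}_2$ rather than $E_2$ at every stage—is the delicate point.
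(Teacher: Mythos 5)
The paper itself does not prove this theorem (it is quoted from \cite{li2021regularized}), so I assess your argument on its own terms and against the machinery the paper does set up. Step 1 is fine. The gap is in Steps 2--3: your inductive class is not closed under one partial regularized integration, and the partial-fraction expansion you use in Step 3 is incomplete in exactly the way that breaks the induction. A meromorphic elliptic function of $z_n$ with poles at $z_1,\dots,z_{n-1}$ is not in general a constant plus a combination of $\wp(z_n-z_j)$ and its derivatives; if it has simple poles with nonzero residues $r_j$ (with $\sum_j r_j=0$), the Weierstrass decomposition also contains $\sum_j r_j\,\zeta(z_n-z_j)$. For $n=2$ this term is absent (a single pole forces zero residue), but already for $n=3$ it occurs, e.g.\ for integrands like $\bigl(\wp'_{31}-\wp'_{32}\bigr)/\bigl(\wp_{31}-\wp_{32}\bigr)$. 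Writing $\zeta = \widehat{Z} + \tfrac{\pi^2}{3}E_2\,z - \A$ and using $\Xint-_{E_\tau}\widehat{Z}(z_n-z_j)\op{vol}=0$, the regularized $z_n$-integral of $\sum_j r_j\zeta(z_n-z_j)$ equals $\sum_j r_j\bigl(-\tfrac{\pi^2}{3}E_2\,z_j + \tfrac{\pi}{\op{Im}\tau}(z_j-\bar z_j)\bigr)$ times the coefficient functions: the output depends explicitly on $\bar z_1,\dots,\bar z_{n-1}$. This is not ``a meromorphic elliptic integrand with an extra polynomial dependence on $1/\op{Im}\tau$,'' so at the next stage of your induction the classical decomposition of Step 3 no longer applies, and the argument stalls.

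The missing idea is precisely the one the paper imports for its other results: one must enlarge the inductive class to almost-meromorphic, almost-elliptic functions --- polynomials in the $\widehat{Z}_{ij}$ (equivalently in $\A_{ij}$ and $\Y$) with coefficients that are meromorphic elliptic functions of the remaining variables (the rings $\widehat{\mathcal{J}}_n$ recalled before Theorem \ref{Si Li's main theorem}) --- and then prove a residue formula for integrating such products, namely Lemma \ref{Residual formulas for Z-hat}: $\Xint-_{E_\tau}\psi\,\widehat{Z}^{k}\op{vol}=\op{Res}\bigl(\psi\,\widehat{Z}^{k+1}/(k+1)\,dz\bigr)$, together with the vanishing of residues of $\bar z^{\,l}f(z)\,dz$ (Proposition \ref{Properties of regularized integral}, part 2). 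With that class the induction closes, the $\bar z$-dependence is bookkept by $\widehat{Z}_{ij}$ and disappears after the last integration, and the final object lands in $\bbc[E_4,E_6][\widehat{E}_2]$ via the Laurent coefficients $G_k$ and $E_2$ of $\zeta$ and $\wp$. You correctly flagged ``closure of the inductive class'' as the main obstacle, but the specific mechanism (simple poles $\Rightarrow$ $\zeta$-terms $\Rightarrow$ anti-holomorphic dependence on the remaining variables) and its remedy are the actual content of the proof, and they are absent from your outline.
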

	
	\paragraph{Regularized integrals on elliptic functions} Let $\wp$ be the Weierstrass's elliptic function, where
	\[\wp(z;\tau) = \frac{1}{z^2} + \sum_{\lambda \in \Lambda_{\tau}\setminus\{0\}} \left(\frac{1}{(z+\lambda)^2} - \frac{1}{\lambda^2}\right).\]
	Let 
	\[\zeta(z;\tau) = \frac{1}{z} + \sum_{\lambda \in \Lambda_{\tau}\setminus\{0\}} \left(\frac{1}{z+\lambda} - \frac{1}{\lambda} + \frac{z}{\lambda^2}\right),\]
	then $\wp$ and $\zeta$ are modular with weight $2$ and $1$ respectively, with $\wp = -\zeta'$. Since $\wp$ is elliptic, 
	\[\zeta(z+1;\tau) - \zeta(z;\tau) = \omega_1, \quad \zeta(z+\tau;\tau) - \zeta(z;\tau) = \omega_2\]
	are constants. Let $P = \{\mu_1 + \mu_2\tau \mid -\frac{1}{2} \leq \mu_1,\mu_2 \leq \frac{1}{2}\}$ be a fundamental domain \cite{martin1966complex}. Use the relation
	\[\frac{1}{2\pi i}\int_{\partial P}\zeta dz = 1,\]
	one has the equation
	\[\omega_1 \tau - \omega_2 = 2\pi i.\]	 
	Moreover,
	\[\zeta(z+1;\tau) - \zeta(z;\tau) = \frac{\pi^2}{3}E_{2}(\tau).\]
	Now consider the following functions
	\begin{align*}
		Z(z) &:= \zeta(z) - \frac{\pi^2}{3}E_2(\tau)z,\\
		\widehat{Z}(z,\bar{z}) &:= \zeta(z) - \frac{\pi^2}{3}E_2(\tau)z + \A, \quad \A := - \frac{\pi}{\op{Im}\tau}(\bar{z} - z)\\
		\widehat{P}(z) &:= - \partial_z \widehat{Z} = \wp(z) + \frac{\pi^2}{3}E_2(\tau) + \Y, \quad \Y := -\frac{\pi}{\op{Im}\tau},
	\end{align*}
	then $\widehat{Z}$ and $\widehat{P}$ are elliptic and modular with weight $1$ and $2$ respectively. 
	
	\begin{prop}\label{Laurent' series 1}
		The Laurent's series of $\xi$ and $\wp$ at $0$ are	
	\begin{align*}
		\zeta(z) &=  \frac{1}{z} - \sum_{k=3}^{\infty} G_{k+1}z^{k} , \\
		 \wp(z) &= \frac{1}{z^2} + \sum_{k=2}^{\infty}(k+1)G_{k+2}z^{k},
	\end{align*}
	where 
	\[G_k = 2 E_k\cdot\sum_{n=1}^{\infty}\frac{1}{n^k} = \sum\limits_{\lambda \in \Lambda\setminus\{0\}} \frac{1}{\lambda^{k}}.\] 
	\end{prop}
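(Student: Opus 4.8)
The plan is to expand the defining series for $\zeta$ directly and then obtain the expansion for $\wp$ from the relation $\wp = -\zeta'$. First I would fix $z$ with $0 < |z| < r$, where $r = \min_{\lambda \in \Lambda \setminus \{0\}} |\lambda|$ is the length of a shortest nonzero lattice vector, so that $|z/\lambda| < 1$ for every $\lambda \neq 0$. For each such $\lambda$ I expand the summand geometrically,
\[
\frac{1}{z+\lambda} = \frac{1}{\lambda} - \frac{z}{\lambda^2} + \sum_{n \geq 2} \frac{(-1)^n}{\lambda^{n+1}} z^n,
\]
and observe that the two correction terms in the definition of $\zeta$ are chosen precisely to cancel the $n=0$ and $n=1$ contributions, leaving
\[
\frac{1}{z+\lambda} - \frac{1}{\lambda} + \frac{z}{\lambda^2} = \sum_{n \geq 2} \frac{(-1)^n}{\lambda^{n+1}} z^n.
\]

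Next I would sum over $\lambda \in \Lambda \setminus \{0\}$ and interchange the two summations to collect powers of $z$, giving
\[
\zeta(z) = \frac{1}{z} + \sum_{n \geq 2} (-1)^n \Bigl( \sum_{\lambda \neq 0} \frac{1}{\lambda^{n+1}} \Bigr) z^n = \frac{1}{z} + \sum_{n \geq 2} (-1)^n G_{n+1} z^n .
\]
To match the stated form I would then use the symmetry $\lambda \mapsto -\lambda$ of the lattice, which yields $G_k = (-1)^k G_k$ and hence $G_k = 0$ for odd $k$. Thus $G_{n+1}$ survives only for odd $n$, where $(-1)^n = -1$, recovering $\zeta(z) = \tfrac{1}{z} - \sum_{k \geq 3} G_{k+1} z^k$ (the even-$k$ terms vanishing automatically). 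For $\wp$, since the series converges uniformly on compact subsets of the punctured disk, term-by-term differentiation is legitimate; applying $\wp = -\zeta'$ and the index shift $k = n-1$, together with the same parity bookkeeping, produces $\wp(z) = \tfrac{1}{z^2} + \sum_{k \geq 2} (k+1) G_{k+2} z^k$.

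The main obstacle is the analytic justification for interchanging the two infinite summations and for differentiating the resulting series term by term. Both reduce to establishing absolute and locally uniform convergence on the punctured disk $0 < |z| < r$, which follows from the convergence of the Eisenstein-type sums $\sum_{\lambda \neq 0} |\lambda|^{-k}$ for $k \geq 3$ combined with the uniform bound $|z/\lambda| \leq |z|/r < 1$ on compact subsets. Once this convergence is in place, the remaining steps are routine manipulation of coefficients and parities.
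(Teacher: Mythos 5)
Your proof is correct and is precisely the ``direct computation'' that the paper's one-line proof alludes to: geometric expansion of each summand, interchange of summations justified by absolute convergence of $\sum_{\lambda\neq 0}|\lambda|^{-k}$ for $k\geq 3$, the parity vanishing $G_{2k+1}=0$, and termwise differentiation for $\wp=-\zeta'$. No discrepancy with the paper's approach.
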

	\begin{proof}
		Direct computation.
	\end{proof}
	
	\begin{lem}[=Lemma 2.6 in \cite{li2023regularized}]\label{Residual formulas for Z-hat}
	 	Let $\psi$ be a meromorphic function on $E_\tau$, and $k \geq 0$ be an integer. One has
	 	\[\Xint-_{E_\tau} \psi \widehat{Z}^{k} \op{vol} = \op{Res}\kuohao{\psi \frac{\widehat{Z}^{k+1}}{k+1}dz}.\]
	\end{lem}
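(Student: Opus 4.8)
The plan is to reduce the regularized integral on the left to a residue by exhibiting the integrand as a $\bar\partial$-exact form and invoking the first property in Proposition \ref{Properties of regularized integral}, which gives $\Xint-_{E_\tau}\bar\partial\xi = -2\pi i\op{Res}_{E_\tau}(\xi)$ for $\xi \in \cala^{1,0}(E_\tau,*D)$. The starting observation is that although $\widehat{Z}$ is not holomorphic, its anti-holomorphic derivative is the constant $\partial_{\bar z}\widehat{Z} = \partial_{\bar z}\A = \Y = -\frac{\pi}{\op{Im}\tau}$, since $\zeta(z)$ and the linear term $\frac{\pi^2}{3}E_2(\tau)z$ are holomorphic in $z$. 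This single identity is what lets me integrate the volume form against a power of $\widehat{Z}$ explicitly.

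First I would write $\op{vol} = \frac{i}{2\op{Im}\tau}dz\wedge d\bar z$ and look for a $(1,0)$-form $\xi = g\,dz$ with $\bar\partial\xi = \psi\widehat{Z}^k\op{vol}$ away from the poles. Since $\bar\partial(g\,dz) = -(\partial_{\bar z}g)\,dz\wedge d\bar z$, I need $\partial_{\bar z}g = -\frac{i}{2\op{Im}\tau}\psi\widehat{Z}^k$. Using $\partial_{\bar z}\widehat{Z} = \Y$ together with $\partial_{\bar z}\psi = 0$ (as $\psi$ is meromorphic), one computes $\partial_{\bar z}\kuohao{\psi\frac{\widehat{Z}^{k+1}}{k+1}} = \Y\,\psi\,\widehat{Z}^k$, so the correct choice is
\[\xi = \frac{i}{2\pi}\,\psi\,\frac{\widehat{Z}^{k+1}}{k+1}\,dz,\]
the constant being fixed by $\Y = -\frac{\pi}{\op{Im}\tau}$. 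Applying Property $1$ and tracking the constants via $-2\pi i\cdot\frac{i}{2\pi} = 1$ then yields exactly $\Xint-_{E_\tau}\psi\widehat{Z}^k\op{vol} = \op{Res}_{E_\tau}\kuohao{\psi\frac{\widehat{Z}^{k+1}}{k+1}dz}$.

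The points that require care, and where I expect the only real friction, are the two hypotheses needed to apply Property $1$. I must check that $\xi$ genuinely lies in $\cala^{1,0}(E_\tau,*D)$, i.e.\ that $\psi\widehat{Z}^{k+1}$ has only holomorphic poles on the finite set $D$ consisting of the poles of $\psi$ together with the origin: near $0$ the Laurent expansion $\zeta(z) = z^{-1} + O(z^3)$ from Proposition \ref{Laurent' series 1} shows $\widehat{Z} = z^{-1} + (\text{smooth})$, whence $\widehat{Z}^{k+1} = \alpha/z^{k+1}$ for a smooth $\alpha$, and similarly at the poles of $\psi$. Second, the residue on the right-hand side is the contour residue $\frac{1}{2\pi i}\oint\xi$, and I should note that the non-holomorphic contributions of $\A$ to $\widehat{Z}^{k+1}$ carry positive powers of $\bar z$ and hence drop out of this residue by the second property in Proposition \ref{Properties of regularized integral}; this is precisely why the bookkeeping closes up and the answer is a genuine holomorphic residue rather than something still involving $\bar z$.
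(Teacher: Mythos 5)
Your proof is correct. Note that the paper itself gives no argument for this lemma at all --- it is imported verbatim as Lemma 2.6 of \cite{li2023regularized} --- so you have supplied a self-contained proof where the author relies on a citation. Your route is the natural one and every step checks out: the identity $\partial_{\bar z}\widehat{Z}=\partial_{\bar z}\A=\Y$ gives $\partial_{\bar z}\bigl(\psi\widehat{Z}^{k+1}/(k+1)\bigr)=\Y\,\psi\widehat{Z}^{k}$, so with $\xi=\frac{i}{2\pi}\psi\frac{\widehat{Z}^{k+1}}{k+1}dz$ one has $\bar\partial\xi=\psi\widehat{Z}^{k}\op{vol}$ away from the poles, and Property~1 of Proposition \ref{Properties of regularized integral} (applicable since $E_\tau$ is compact without boundary, $\widehat{Z}$ is elliptic so $\xi$ is globally defined, and $\widehat{Z}^{k+1}=\alpha/z^{k+1}$ with $\alpha$ smooth near each pole, so $\xi\in\cala^{1,0}(E_\tau,*D)$) yields $-2\pi i\cdot\frac{i}{2\pi}=1$ times the claimed residue. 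Your closing remark that the $\bar z$-contributions of $\A$ die in the contour residue by Property~2 is also right, and is exactly what makes the right-hand side effectively computable from the holomorphic Laurent data, as in Example \ref{Value of P}.
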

	
	\begin{eg} \label{Value of P}
		Applying Lemma \ref{Residual formulas for Z-hat} and Proposition \ref{Laurent' series 1}, one has
		\begin{align*}
		\Xint-_{E_\tau}\widehat{P}\op{vol} &= \op{Res}_{z=0}(\widehat{P}\widehat{Z}dz) = 0,\\
		\Xint-_{E_\tau}\widehat{P}^2\op{vol} &= \op{Res}_{z=0}(\widehat{P}^2\widehat{Z}dz) =\pi^4\frac{-\widehat{E}^2_2+E_4}{9},\\
		\Xint-_{E_\tau}\widehat{P}^3\op{vol} &= \op{Res}_{z=0}(\widehat{P}^3\widehat{Z}dz) = \pi^6\frac{-10\widehat{E}_2^3+6\widehat{E}_2E_4+4E_6}{5\cdot 27}.
	\end{align*}
	\end{eg}

	\paragraph{Holomorphic anomaly equation} Let $\mathcal{F}_n$ be the set of functions on $\bbc^n \times \bbh$ that are smooth everywhere, except possibly with holomorphic poles along
	\[D_{n} = \{(z_1,\cdots,z_n;\tau) \mid \exists 1 \leq i < j \leq n,\  s.t.\  z_i-z_j \in \Lambda_{\tau}\}.\]
	Let $\mathcal{J}_n = \mathcal{F}_n \cap \bbc(\wp_{ij},\wp_{ij}',E_4,E_6)$, and $\widehat{\mathcal{J}}_{n} = \mathcal{J}_n[\widehat{Z}_{ij},\widehat{E}_2]$, where $f_{ij} = f(z_i-z_j)$ for $f $ taking $ \wp,\wp',\widehat{Z},\A$. 
	\begin{thm}[Theorem 1.2 in \cite{li2023regularized}]\label{Si Li's main theorem}
		Let $\Psi \in \widehat{J}_n, n\geq 2$ be an almost-elliptic function, then one has
		\[\partial_{\Y} \Xint-_{E_{\tau}^n} \Psi = \Xint-_{E_{\tau}^n} \partial_{\Y}\Psi - \sum_{a<b} \Xint-_{E_{\tau}^{n-1}}\op{Res}_{z_a=z_b}\kuohao{(z_a-z_b)\Psi dz_a}.\]
		Here in computing $\partial_{\Y} \Psi$, $\Psi$ is regarded as an element in $\mathcal{J}_n[Z_{ij},E_{2}] \otimes \bbc[\A_{ij},\Y]$ with the convention $\partial_{\Y}\A_{ij}=0$.
	\end{thm}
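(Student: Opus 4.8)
The plan is to prove the identity by induction on $n$, stripping off one integration variable with the Fubini-type property (Proposition~\ref{Properties of regularized integral}(4)) and reducing to a single one-variable statement. Writing $\Xint-_{E_\tau^n}\Psi\op{vol}=\Xint-_{E_\tau^{n-1}}g\,\op{vol}$ with $g:=\Xint-_{E_\tau}\Psi\op{vol}_1$ the inner regularized integral in the variable $z_1$ alone, one first records that $g\in\widehat{\mathcal{J}}_{n-1}$ (the required closure is contained in Proposition~\ref{Properties of regularized integral}(3) together with the modularity of Theorem~\ref{thm about modularity}). Applying the inductive hypothesis to $\Xint-_{E_\tau^{n-1}}g\,\op{vol}$ produces $\Xint-_{E_\tau^{n-1}}\partial_\Y g\,\op{vol}$ minus the residue terms for all pairs $2\le a<b\le n$, where $\partial_\Y g$ is the convention derivative of $g$ as an element of $\widehat{\mathcal{J}}_{n-1}$. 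Everything then hinges on the one-variable base case, which I would formulate as $\partial_\Y\Xint-_{E_\tau}\Psi\op{vol}_1 = \Xint-_{E_\tau}\partial_\Y\Psi\op{vol}_1-\sum_{b\ge2}\op{Res}_{z_1=z_b}\kuohao{(z_1-z_b)\Psi\,dz_1}$, with $\partial_\Y$ the convention derivative treating every $\widehat{Z}_{ij}$ and $\A_{ij}$ as constant. One also uses that $\op{Res}_{z_a=z_b}$ commutes with $\Xint-_{E_\tau}$ in $z_1$ when $z_1$ is away from $z_a,z_b$ (immediate from the locality of the residue), so that the pairs through $z_1$ and the pairs produced by the hypothesis combine into the full sum $\sum_{a<b}$; the induction bottoms out at the trivial $n=1$ case, where the integrand is pole-free and $\partial_\Y$ simply passes through an honest integral.

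The heart of the argument is the base case, and I would attack it by converting the regularized integral into residues via Lemma~\ref{Residual formulas for Z-hat}. As a function of $z_1$, $\Psi$ is a polynomial in the $\widehat{Z}_{1b}$ $(b\ge2)$ whose coefficients are meromorphic in $z_1$ (through $\wp_{1b},\wp'_{1b}$), so a multi-pole extension of Lemma~\ref{Residual formulas for Z-hat} writes $\Xint-_{E_\tau}\Psi\op{vol}_1$ as a finite sum $\sum_{b\ge2}\op{Res}_{z_1=z_b}(\cdots \widehat{Z}_{1b}^{\,k+1}/(k+1)\cdots\,dz_1)$, each integration against $\op{vol}_1$ promoting one factor of $\widehat{Z}_{1b}$. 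I would then differentiate this residue expression in $\Y$. The decisive local fact is that inside a residue the antiholomorphic part of $\widehat{Z}_{1b}$ is invisible: the genuine derivative is $\partial_\Y\widehat{Z}_{1b}=\overline{z_1-z_b}-(z_1-z_b)$, and since $\op{Res}(\bar z^{\,l}f\,dz)=0$ for $l\ge1$ by Proposition~\ref{Properties of regularized integral}(2), within any residue $\partial_\Y\widehat{Z}_{1b}$ acts as multiplication by $-(z_1-z_b)$. Differentiating each promoted primitive $\widehat{Z}_{1b}^{\,k+1}/(k+1)$ therefore contributes $\widehat{Z}_{1b}^{\,k}\cdot(-(z_1-z_b))$, which resums exactly to $-\op{Res}_{z_1=z_b}\kuohao{(z_1-z_b)\Psi\,dz_1}$, while the derivatives landing on the explicit $\Y$'s (those carried by $\widehat{E}_2$, $\widehat{P}$, and the bare $\Y$) are precisely the convention derivative and resum, again via Lemma~\ref{Residual formulas for Z-hat}, to $\Xint-_{E_\tau}\partial_\Y\Psi\op{vol}_1$. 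The Laurent expansions of Proposition~\ref{Laurent' series 1} are what confirm that only the single power $(z_1-z_b)$ survives, with the right coefficient and sign.

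I expect the main obstacle to be exactly this base-case analysis, and in particular the reconciliation of the total $\Y$-derivative on the left with the convention derivative on the right. Two points require care. First, Lemma~\ref{Residual formulas for Z-hat} is stated for a single $\widehat{Z}$, so one must establish the multi-pole, multi-power version and verify that $\partial_\Y$ commutes with forming these iterated residues; the subtlety is that taking $\op{Res}_{z_1=z_b}$ sends the spectator factors $\widehat{Z}_{1b'}$ to $\widehat{Z}_{bb'}$, and one has to check these are correctly treated as constants by the convention derivative at the $(n-1)$-stage, so that no spurious antiholomorphic contributions leak in. Second, one must confirm that the $\Y$ produced by the promoted primitive, after the residue is taken, reorganizes into the explicit-$\Y$ structures of $g$ (for instance into $\widehat{E}_2$), consistently with the convention derivative used by the inductive hypothesis. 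Once the one-variable identity and the closure $g\in\widehat{\mathcal{J}}_{n-1}$ are secured, the inductive step is the routine substitution sketched above.
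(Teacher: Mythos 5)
First, a point of comparison: the paper does not prove this statement at all. It is imported verbatim as Theorem 1.2 of \cite{li2023regularized} and used as a black box (its only role here is to feed into Theorem \ref{Main Theorem}), so there is no in-paper proof to measure your argument against. Judged on its own terms, your outline follows the natural route and, as far as the mechanism goes, the right one: strip off one variable by iterated regularized integration, convert the inner integral to residues via a multi-pole extension of Lemma \ref{Residual formulas for Z-hat}, and observe that the genuine $\Y$-derivative of a promoted factor $\widehat{Z}_{1b}^{\,k+1}/(k+1)$ contributes $\widehat{Z}_{1b}^{\,k}\bigl(\overline{z_1-z_b}-(z_1-z_b)\bigr)$, whose antiholomorphic part is killed inside the residue by Proposition \ref{Properties of regularized integral}(2), leaving exactly $-\op{Res}_{z_1=z_b}\kuohao{(z_1-z_b)\Psi dz_1}$; this is consistent with the sign and coefficient one can check against Example \ref{Value of P}.

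That said, three of your asserted steps are genuine lemmas rather than observations, and as written they are gaps. (i) The closure $g=\Xint-_{E_\tau}\Psi\op{vol}_1\in\widehat{\mathcal{J}}_{n-1}$ does not follow from Proposition \ref{Properties of regularized integral}(3), which only controls smoothness and pole type, nor from Theorem \ref{thm about modularity}, which concerns the fully integrated quantity; it must be read off from the explicit residue formula for the inner integral. (ii) The interchange of $\op{Res}_{z_a=z_b}$ with $\Xint-_{E_\tau}(\cdot)\op{vol}_1$ for $2\le a<b$ is not ``immediate from locality'': the regularized integral over $z_1$ is a global operation whose value depends on the poles of $\Psi$ at $z_1=z_a$ and at $z_1=z_b$, and these collide precisely in the limit the residue probes; this commutation is itself one of the substantive inputs in \cite{li2023regularized}. (iii) The multi-pole primitive needed to extend Lemma \ref{Residual formulas for Z-hat} exists because all the $\widehat{Z}_{1b}$ have the same $\bar\partial_{z_1}$ (this is essentially the content of Lemma \ref{delta bar inverse} in Section 4, and is also where the non-uniqueness of the primitive must be shown not to affect the answer), but you assume it rather than prove it. With these three points supplied, your sketch becomes a proof; without them it is a correct strategy with the hard analytic steps deferred.
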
	
	
	\begin{rmk}
		In fact, for $a \neq b$, one has the equality
		\[\op{Res}_{z_a=z_b}\kuohao{(z_a-z_b)\Psi dz_a} = \op{Res}_{z_b=z_a}\kuohao{(z_b-z_a)\Psi dz_b}.\]
	\end{rmk}

	\section{Regularized Integral on Riemann Graph}

	\subsection{Regularized Integral on Feymann Graphs}
	
	\begin{defn} \label{Def of decorated graph}
		A decorated graph $(\Gamma,n)$ is the following set of data:
		\begin{itemize}
			\item A set $V(\Gamma)$ of vertices.
			\item A set $E(\Gamma)$ of directed \emph{edges} and a set $L(\Gamma)$ of \emph{loops}.
			 \item The assignment of head and tail to each edge, i.e. two maps $h,t \colon E(\Gamma) \to V(\Gamma)$ with $h(e) \neq t(e)$, $\forall e \in E$.
			 \item The assignment of vertex to each loop, i.e. a map $\mu \colon L(\Gamma) \to V(\Gamma)$.
			 \item The assignment of decorations, i.e. a map $n \colon E(\Gamma) \cup L(\Gamma) \to \bbn$.
		\end{itemize}
	\end{defn}
	We will only consider \emph{finite} decorated graphs, i.e. with $V(\Gamma)$, $E(\Gamma)$ and $L(\Gamma)$ finite. We will always denote $m$ as the number of vertices in $\Gamma$. 
	\begin{eg}
		Here is a decorated graph with two vertices $v,w$, two edges with decoration $0$ and one loop with decoration $2$. 
		\[
			\begin{tikzpicture}
				\node[point] (a) [label = below:{$v$}]{};
				\node[point] (b) at (4,0) [label = below:{$w$}] {};
				\path[-latex] (a) edge[bend left] node[above] {0} (b);
				\path[-latex] (a) edge[bend right] node[below] {0} (b);
				\path[-latex] (a) edge[reflexive = 45, in = 225, out=135] node[left] {2} (a);
			\end{tikzpicture}
		\]

	\end{eg}
		
	\begin{defn}
		Let $(\Gamma,n)$ be a decorated graph and let $E_{\tau}$ be an elliptic curve. Let 
		\[\Phi_{(\Gamma,n)} = \prod_{e \in E(\Gamma)}\pds{z_{h(e)}}{n_e}\widehat{P}(z_{h(e)}-z_{t(e)}),\]
		and let 
		\[\widetilde{W}(\Gamma,n) := \Xint-_{E_{\tau}^{m}} \Phi_{(\Gamma,n)} \op{vol},\]
		where $m$ is the number of vertices in $\Gamma$.
	\end{defn}
	\begin{rmk}
		The information of loops are not included here. We will treat the contribution of loops in Lemma/Definition \ref{Contribution of loops}.
	\end{rmk}
	\begin{defn}
		Let $(\Gamma,n)$ be a decorated graph. For $A \subset E \cup L$, the \emph{weight of $A$} is defined as 
		\[\omega(A) := \sum_{e \in A} (n_e+2).\]
	\end{defn}
	\begin{rmk}
		The function $\Phi_{(\Gamma,n)}$ is holomorphic away from diagonals and modular of weight $\omega(E)$, and $\widetilde{W}(\Gamma,n) \in \bbc[E_4,E_6][\widehat{E}_2]$ is modular of weight $\omega(E)$ and almost holomorphic on the upper half plane $\bbh$.
	\end{rmk} 
	
	If $e$ is a loop attached on $w$ with decoration $k$, then it can be considered as a limit process shown as follows,
	\[
		\begin{tikzpicture}[modal]
			\node[point] (w) [label=below:{$w$}] {};
			\node[point] (v) at (-1,0) [label=below:{$v$}] {};
			\path[-latex] (v) edge[reflexive = 5, in = 45, out=135] node[above] {$k$} (w);
			\path[->] (v) edge[dashed] (w);
		\end{tikzpicture}
	\]
	which leads us to consider the local information around each point $z_v \in E_{\tau}$.
	
	\begin{lem-defn} \label{Contribution of loops}
		The regularized integral on loops with decoration $k$ is defined as follows:
		\[W_k = \lim_{\varepsilon \to 0} \frac{\Xint-_{E_{\tau} \times B_{\varepsilon}(z_v)}\pds{z_v}{k}\widehat{P}(z_v-z_w)\op{vol}}{\Xint-_{E_{\tau} \times B_{\varepsilon}(v_1)} \op{vol}}.\]
		Moreover, the limit exists and one has
		\[W_k = \begin{cases}
		\frac{\pi^2}{3}\widehat{E}_2, \quad &k=0\\
		(k+1)!G_{k+2}, \quad & k>0.
	\end{cases}\]
	\end{lem-defn}
	\begin{proof}
		Notice that
		\[\pds{z_v}{k}\widehat{P}(z_v-z_w)\vol{w} = \partial_{w} \kuohao{\pds{z_v}{k}\widehat{Z}(z_v-z_w)\frac{d\bar{z}_w }{2i\op{Im}\tau}}.\]
		By definition of regularized integral, one has
		\[\Xint-_{B_{\varepsilon}(z_v)}\pds{z_v}{k}\widehat{P}(z_v-z_w)\vol{w} = \int_{\partial B_{\varepsilon}(z_v)}\pds{z_v}{k}\widehat{Z}(z_v-z_w)\frac{d\bar{z}_w }{2i\op{Im}\tau}.\]
		Let $z_w = z_v - \varepsilon e^{i\theta}$, then
		\[\int_{\partial B_{\varepsilon}(z_v)}\pds{z_v}{k}\widehat{Z}(z_v-z_w) d\bar{z}_w = \int_{0}^{2\pi} \partial^{k}\widehat{Z}(\varepsilon e^{i\theta}) \cdot (i\varepsilon e^{-i\theta})d\theta.\]
		The value of above integration depends on the Fourier coefficient of $e^{i\theta}$ about the function $\partial^{k}\widehat{Z}$, which is only associated to the coefficient of $z$ in this case. Therefore,  
		\[\int_{0}^{2\pi} \partial^{k}\widehat{Z}(\varepsilon e^{i\theta}) \cdot (i\varepsilon e^{-i\theta})d\theta = \begin{cases}
		-2\pi i \varepsilon^2 \frac{\pi^2}{3}\widehat{E}_2, \quad &k=0\\
		-2\pi i \varepsilon^2 (k+1)!G_{k+2}, \quad & k>0.
		\end{cases}\]
		Then it's easy to obtain
		\[W_k = \begin{cases}
			\frac{\pi^2}{3}\widehat{E}_2, \quad &k=0\\
			(k+1)!G_{k+2}, \quad & k>0,
		\end{cases}\]
		which completes the proof.
	\end{proof}

	\begin{eg} \label{The value of Wk's}
		For all odd integers, we have $W_k = 0$. And 
		$W_0 = \frac{\pi^2}{3}\widehat{E}_2$, 
		\[W_2 = 3! \cdot \frac{\pi^4}{90} \cdot 2E_4 = \frac{2\pi^4}{15}E_4, \quad W_4 = 5! \cdot \frac{\pi^6}{945} \cdot 2E_6 = \frac{16\pi^6}{63}E_6.\]
	\end{eg}
	
	\begin{cor} \label{loop formula of propagator}
		Using Proposition \ref{Laurent' series 1}, the Laurent's series of $\widehat{P}$ at $0$ is 
		\[\widehat{P}(z) = \frac{1}{z^2} + \sum_{k=0}^{\infty} \frac{W_k}{k!}z^k.\]
		Similarly, the  Laurent's series of $\partial^{n}\widehat{P}$ at $0$ is 
		\[\partial^{n}\widehat{P}(z) = (-1)^n \frac{(n+1)!}{z^{n+2}} + \sum_{k=0}^{\infty} \frac{W_{n+k}}{k!}z^k.\]
 	\end{cor}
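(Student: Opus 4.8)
The plan is to prove both expansions by directly substituting the known Laurent data and comparing coefficients, then to obtain the second formula by termwise differentiation of the first. For the expansion of $\widehat{P}$ itself, I would start from its definition $\widehat{P}(z) = \wp(z) + \frac{\pi^2}{3}E_2(\tau) + \Y$ and insert the Laurent series of $\wp$ from Proposition \ref{Laurent' series 1}, namely $\wp(z) = \frac{1}{z^2} + \sum_{k=2}^{\infty}(k+1)G_{k+2}z^k$. The polar part $1/z^2$ of the claimed formula is then immediate, and it remains to match the holomorphic part $\sum_{k=0}^{\infty}\frac{W_k}{k!}z^k$ coefficient by coefficient against $\frac{\pi^2}{3}E_2 + \Y + \sum_{k\ge 2}(k+1)G_{k+2}z^k$.

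The only place needing care is the constant term, where the ``$\Y$'' in the definition of $\widehat{P}$ must be reconciled with the anti-holomorphic piece of $W_0$. Here I would use the value $W_0 = \frac{\pi^2}{3}\widehat{E}_2$ from Lemma/Definition \ref{Contribution of loops} together with $\widehat{E}_2 = E_2 - \frac{3}{\pi\op{Im}\tau}$ to compute $\frac{W_0}{0!} = \frac{\pi^2}{3}E_2 - \frac{\pi}{\op{Im}\tau} = \frac{\pi^2}{3}E_2 + \Y$, exactly the constant appearing in $\widehat{P}$. For $k\ge 1$ the comparison uses $W_k = (k+1)!\,G_{k+2}$, so that $\frac{W_k}{k!} = (k+1)G_{k+2}$; this agrees with the coefficient of $z^k$ in $\wp$ for every $k\ge 2$, while for $k=1$ both sides vanish, since $\wp$ has no linear term and $W_1 = 0$ (odd $W_k$ vanish, cf. Example \ref{The value of Wk's}). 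This establishes the first expansion.

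For the second formula I would differentiate the first expansion $n$ times in $z$, which is legitimate termwise on a punctured disk around $0$ as these are convergent Laurent series. The polar term gives $\partial^n z^{-2} = (-1)^n (n+1)!\, z^{-(n+2)}$, the claimed leading term, and for the holomorphic part $\partial^n z^k = \frac{k!}{(k-n)!}z^{k-n}$ for $k\ge n$ and vanishes otherwise, so $\partial^n \frac{W_k}{k!}z^k = \frac{W_k}{(k-n)!}z^{k-n}$; re-indexing by $j = k-n$ collapses the sum to $\sum_{j\ge 0}\frac{W_{n+j}}{j!}z^j$, which is the asserted holomorphic part after renaming $j$ to $k$. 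There is no substantive obstacle: once Proposition \ref{Laurent' series 1} and the values of $W_k$ are in hand, the entire statement reduces to elementary coefficient bookkeeping, the single subtle point being the careful handling of the $-\frac{3}{\pi\op{Im}\tau}$ inside $\widehat{E}_2$ so that the constant term comes out correctly.
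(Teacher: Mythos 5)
Your proof is correct and follows exactly the route the paper intends (the corollary is stated without a written proof beyond the pointer to Proposition \ref{Laurent' series 1}): substitute the Laurent expansion of $\wp$ into $\widehat{P}=\wp+\frac{\pi^2}{3}E_2+\Y$, match coefficients using $W_0=\frac{\pi^2}{3}\widehat{E}_2$ and $W_k=(k+1)!\,G_{k+2}$, and differentiate termwise for the second formula. Your handling of the constant term via $\widehat{E}_2=E_2-\frac{3}{\pi\op{Im}\tau}$ and of the vanishing $k=1$ term is precisely the bookkeeping the paper leaves implicit.
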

 	
 	\begin{defn} \label{Definition of R.I}
 		For a decorated graph $(\Gamma,n)$, let 
 		\[\Psi_{(\Gamma,n)} := \Phi_{(\Gamma,n)} \prod_{e \in L} W_{n_e}.\]
 		\emph{The regularized integral} of $(\Gamma,n)$ on $E_{\tau}$ is defined by
 		\[W(\Gamma,n) : = \widetilde{W}(\Gamma,n) \prod_{e \in L}W_{n_e},\]
 		or similarly
 		\[W(\Gamma,n) := \Xint-_{E_{\tau}^m} \Psi_{(\Gamma,n)} \op{vol}.\]
 	\end{defn}
 	
 	\begin{eg}
		If the decorated graph $(\Gamma,n)$ is given as follows,
		\[
			\begin{tikzpicture}
				\node[point] (a) [label = below:{$v$}]{};
				\node[point] (b) at (4,0) [label = below:{$w$}] {};
				\path[-latex] (a) edge[bend left] node[above] {0} (b);
				\path[-latex] (a) edge[bend right] node[below] {0} (b);
				\path[-latex] (a) edge[reflexive = 45, in = 225, out=135] node[left] {0} (a);
			\end{tikzpicture}
		\]
		then $\Phi_{(\Gamma,n)} = \widehat{P}^2(z_v-z_w)$. By Example \ref{Value of P}, 
		\[\widetilde{W}(\Gamma,n) = \pi^4\frac{-\widehat{E}^2_2+E_4}{9} \quad W(\Gamma,n) = \pi^6 \frac{-\widehat{E}_2^3+E_4\widehat{E}_2}{3^3}.\]
	\end{eg}
	
	\begin{eg}
		Let $(\Gamma,n)$ be a decorated graph. If there exists a vertex $v \in V(\Gamma)$ such that there is only one edge connects $v$, say $e$, then
		\[\Phi_{(\Gamma,n)} = \pds{z_{h(e)}}{n_e} \widehat{P}(z_{h(e)}-z_{t(e)}) \cdot \Theta,\]
		where $\Theta$ does not have the variable $z_v$. Since $\partial^{n_e}_{z}\widehat{P} = \partial^{n_e+1}_{z}\widehat{Z}$, by the first part of Proposition \ref{Properties of regularized integral}, one has
		\[\Xint-_{E_{\tau}} \Phi_{(\Gamma,n)} \vol{v} = 0.\]
		So $W(\Gamma,n) = 0$.
	\end{eg}
	
	By Theorem \ref{Si Li's main theorem}, the anti-holomorphic dependence of $\widetilde{W}(\Gamma,n)$ is given by
	\[\partial_{\Y} \widetilde{W}_{(\Gamma,n)} = \Xint-_{E_{\tau}^m} \partial_{\Y}\Phi_{(\Gamma,n)} \op{vol} - \sum_{a < b} \op{Res}_{z_{v_a} = z_{v_b}}\kuohao{(z_{v_a}-z_{v_b})\Phi_{(\Gamma,n)} dz_{v_a}},\]
	which leads us to consider the geometric meanings of 
	\[\partial_{\Y}\Phi_{(\Gamma,n)} \quad \text{and} \quad \op{Res}_{z_v=z_w}\kuohao{(z_v-z_w)\Phi_{(\Gamma,n)}dz_{v}}.\]
	
	\begin{lem}\label{First part of differential}
		Let $(\Gamma,n)$ be a decorated graph. Then 
		\begin{equation} 
			\partial_{\Y}\Phi_{(\Gamma,n)} = \sum_{e \in E(\Gamma), \atop n_e = 0} \Phi_{(\Gamma \setminus e,n)},
		\end{equation}
		where $(\Gamma \setminus e,n)$ is the graph by deleting the edge $e$ in $\Gamma$, and the decoration on $\Gamma\setminus e$ is the restriction of the decoration on $\Gamma$.
	\end{lem}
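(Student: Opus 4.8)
The plan is to reduce everything to the Leibniz rule, exploiting that $\widehat{P}$ is affine-linear in $\Y$. Recall from the excerpt that $\widehat{P}(z) = \wp(z) + \frac{\pi^2}{3}E_2(\tau) + \Y$, so $\widehat{P}$ depends on $\Y$ only through the additive constant $\Y$, with coefficient $1$; here, following the convention recorded in Theorem \ref{Si Li's main theorem}, $\wp$ and $E_2$ are regarded as independent of $\Y$. Writing $\Phi_{(\Gamma,n)} = \prod_{e \in E(\Gamma)} f_e$ with $f_e := \pds{z_{h(e)}}{n_e}\widehat{P}(z_{h(e)}-z_{t(e)})$, the first step is to compute $\partial_{\Y} f_e$ factor by factor.

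The key (and essentially only nontrivial) observation is that the $z$-derivatives annihilate the $\Y$-dependence. For $n_e = 0$ we simply have $f_e = \widehat{P}(z_{h(e)}-z_{t(e)})$, so $\partial_{\Y} f_e = 1$. For $n_e > 0$, differentiating in $z$ kills the constant $\frac{\pi^2}{3}E_2 + \Y$, giving $f_e = \pds{z_{h(e)}}{n_e}\wp(z_{h(e)}-z_{t(e)})$, which is manifestly independent of $\Y$; hence $\partial_{\Y} f_e = 0$. Thus exactly the edges with decoration $0$ contribute.

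Applying the product rule $\partial_{\Y}\prod_e f_e = \sum_e (\partial_{\Y} f_e)\prod_{e'\neq e} f_{e'}$ and discarding the vanishing terms, I obtain
\[\partial_{\Y}\Phi_{(\Gamma,n)} = \sum_{\substack{e\in E(\Gamma)\\ n_e=0}} \prod_{e'\neq e} f_{e'}.\]
It then remains to identify each surviving product with $\Phi_{(\Gamma\setminus e,n)}$: deleting the edge $e$ from $\Gamma$ removes precisely the factor $f_e$ from the defining product while leaving all the others and their decorations intact, so $\prod_{e'\neq e} f_{e'} = \Phi_{(\Gamma\setminus e,n)}$ directly by the definition of $\Phi$. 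This yields the claimed formula. There is no serious obstacle here; the only points that require care are bookkeeping the $\partial_{\Y}$ convention (remembering that $\wp$ and $E_2$ carry no $\Y$-dependence, so that a factor survives the differentiation precisely when its $z$-derivative order is zero), and noting that the loop factors $W_{n_e}$ play no role, since by definition they enter $\Psi_{(\Gamma,n)}$ but are absent from $\Phi_{(\Gamma,n)}$.
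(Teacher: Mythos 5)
Your proof is correct and follows essentially the same route as the paper: both arguments reduce to the observations that $\partial_{\Y}\widehat{P}=1$, that $\partial_{\Y}\partial^{n}\widehat{P}=0$ for $n>0$, and then apply Leibniz's rule. Your version merely spells out why the positive-order $z$-derivatives kill the $\Y$-dependence and records the $\partial_{\Y}$-convention, which the paper leaves implicit.
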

	\begin{proof}
		Since $\partial_{\Y} \partial^{n}\widehat{P}(z) = 0$ for $n > 0$ and $\partial_{\Y} \widehat{P}(z) = 1$. By Leibniz's rule,
		\[\partial_{\Y}\Phi_{(\Gamma,n)} = \sum_{e \in E(\Gamma), \atop n_e = 0} \Phi_{(\Gamma \setminus e,n)},\]
		which completes the proof.
	\end{proof}
	
	Now let's focus on the residual formula
	\begin{equation} 
		\op{Res}_{z_v = z_w} \kuohao{(z_v-z_w)\prod_{e \in E(\Gamma)} \pds{z_{h(e)}}{n_e}\widehat{P}(z_{h(e)} - z_{t(z)})dz_v}.
	\end{equation}
	Clearly, it only depends on the terms with variable $z_v$. Geometrically, these terms are propagators put on edges who connect the vertex $v$.
	\begin{figure}[htb]
		\[
		\begin{tikzpicture}[modal]
			\node[point] (w) [label=below:{$w$}] {};
			\node[point] (v) at (4,0) [label=below:{$v$}] {};
			%\path[<-] (w) edge[reflexive point] node[above] {$k$} (w);
			\path[-latex] (v) edge[bend right] node[above] {$n_1$} node[below]{$\vdots$}(w);
			\path[-latex] (v) edge[bend left] node[below] {$n_k$} (w);
			%\draw[-] (4,0) -- (5,1);
			\path[->] (v) edge[] node[above] {$m_1$} node[below]{$\ \vdots$} (6,1);
			\path[->] (v) edge[] node[below] {$m_r$} (6,-1);
		\end{tikzpicture}
		\]
		\caption{The edges connecting $v$ in $\Gamma$}
		\label{figure 1}
	\end{figure}
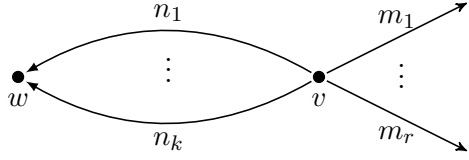
	
	Without loss of generality, suppose there are $k$ edges connecting $v$ and $w$, with decorations $n_1,\cdots,n_k$. And the other $r$ edges connecting $v$ are $e_1,\cdots,e_r$, with decorations $m_1,\cdots,m_r$. Moreover, suppose $v$ is the head of those edges, and the graph is shown in Figure \ref{figure 1}. 
	
	\begin{lem} \label{Summation of residual graph terms}
		The residual formula 
		\begin{equation} 
			\op{Res}_{z_v=z_w} \kuohao{ (z_v-z_w) \cdot \prod_{i=1}^{k} \pds{z_v}{n_i}\widehat{P}(z_v-z_w) \cdot \prod_{i=1}^{r} \pds{z_{v}}{m_i}\widehat{P}(z_{v}-z_{t(e_i)})dz_v}
		\end{equation}
		is the summation of terms like
		\begin{equation}\label{Residual graph}
			W_{j_{s+1}+n_{s+1}} \cdots W_{j_k+n_k}\prod_{i=1}^{r} \pds{z}{m_i+u_i}\widehat{P}(z_w -z_{t(e_i)}),
		\end{equation}
		with coefficients 
		\[(-1)^{n_1+\cdots+n_s}\frac{(n_1+1)!\cdots(n_s+1)!}{j_{s+1}!\cdots j_{k}!u_1!\cdots u_r!},\]
		where $n_1+\cdots+n_s + 2s -2 = j_{s+1} + \cdots + j_k + u_1 + \cdots + u_r.$
	\end{lem}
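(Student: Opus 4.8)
The plan is to pass to the local coordinate $u = z_v - z_w$ near the diagonal $z_v = z_w$, expand every propagator attached to $v$ as a Laurent (resp.\ Taylor) series in $u$, multiply all the factors together with the prefactor $(z_v - z_w) = u$, and then read off the residue as the coefficient of $u^{-1}$ via part $2$ of Proposition \ref{Properties of regularized integral}. The key observation is that the $k$ propagators joining $v$ and $w$ depend only on $u$ and are singular at $u = 0$, whereas the $r$ propagators joining $v$ to the other vertices $t(e_i)$ (which satisfy $t(e_i) \neq w$) are holomorphic in $z_v$ in a neighborhood of $z_v = z_w$ and so contribute only power-series terms. Since $\widehat{P}$ is meromorphic in $z$ (the $\A$-term is killed by $-\partial_z$), the residue is literally the $u^{-1}$-coefficient of a product of Laurent series.

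First I would expand the $k$ singular factors. By Corollary \ref{loop formula of propagator},
\[\pds{z_v}{n_i}\widehat{P}(z_v - z_w) = (-1)^{n_i}\frac{(n_i+1)!}{u^{n_i+2}} + \sum_{j \geq 0}\frac{W_{n_i+j}}{j!}u^{j},\]
so each such factor contributes either its single pole monomial or one monomial $\frac{W_{n_i+j_i}}{j_i!}u^{j_i}$ from its holomorphic part. For the $r$ regular factors I would Taylor-expand in $z_v$ about $z_v = z_w$; since $\pd{z_v}$ acts as the argument-derivative $\partial$ on $\widehat{P}(z_v - z_{t(e_i)})$, Taylor's theorem gives
\[\pds{z_v}{m_i}\widehat{P}(z_v - z_{t(e_i)}) = \sum_{u_i \geq 0}\frac{1}{u_i!}\pds{z}{m_i+u_i}\widehat{P}(z_w - z_{t(e_i)})\,u^{u_i}.\]

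Next I would multiply the $k+r$ expansions together with $(z_v-z_w) = u$ and collect the coefficient of $u^{-1}$. Choosing a subset of the $k$ factors (after relabeling, the first $s$ of them) to contribute poles and the rest to contribute holomorphic monomials with exponents $j_{s+1}, \dots, j_k$, a generic monomial carries total $u$-power
\[1 - \sum_{i=1}^{s}(n_i+2) + \sum_{i=s+1}^{k} j_i + \sum_{i=1}^{r} u_i,\]
and imposing that this equals $-1$ is exactly the stated constraint $n_1 + \cdots + n_s + 2s - 2 = j_{s+1} + \cdots + j_k + u_1 + \cdots + u_r$. The accompanying numerical factors then multiply to $(-1)^{n_1+\cdots+n_s}\frac{(n_1+1)!\cdots(n_s+1)!}{j_{s+1}!\cdots j_k! u_1!\cdots u_r!}$, while the surviving propagators assemble into $W_{j_{s+1}+n_{s+1}}\cdots W_{j_k+n_k}\prod_{i=1}^{r}\pds{z}{m_i+u_i}\widehat{P}(z_w - z_{t(e_i)})$; summing over all subsets and all index tuples satisfying the constraint produces precisely the asserted sum of terms of the form \eqref{Residual graph}.

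I expect the genuinely delicate points to be bookkeeping rather than analysis. The cleanest way to keep the combinatorics honest is to record that $\widehat{P}$ has principal part exactly $1/z^2$ with no $1/z$ term (from the Laurent series in Proposition \ref{Laurent' series 1}), so that $\pds{z}{n_i}\widehat{P}$ has a \emph{single} pole monomial and ``contributing the pole'' is an unambiguous choice; this is what makes the $(-1)^{n_i}(n_i+1)!$ factors appear cleanly. The other point to verify is that the $r$ edges leave $z_w$ untouched on the diagonal, i.e.\ $t(e_i) \neq w$, so no spurious singular contributions arise from those factors. Everything else is the routine extraction of a residue from a product of series.
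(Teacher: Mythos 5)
Your proposal is correct and follows essentially the same route as the paper: substitute $z = z_v - z_w$, expand the $k$ diagonal propagators via the Laurent series of Corollary \ref{loop formula of propagator}, treat the remaining $r$ factors as a holomorphic function at $z=0$, and extract the residue, with your direct Taylor-expansion-and-read-off-the-$u^{-1}$-coefficient step being the same computation the paper performs by converting the singular part into a differential operator $\frac{1}{l!}\partial^{l}$ and applying Leibniz's rule. The bookkeeping of signs, factorials, and the degree constraint matches the paper's.
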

	
	\begin{rmk}
		Formula \eqref{Residual graph} can be given by the decorated graph in Figure \ref{figure 2} , which is a graph by collapsing the first $s$ edges between $v$ and $w$ in Figure \ref{figure 1}. 
	Roughly speaking, the weights of those $s$ edges are assigned to other edges in Figure \ref{figure 1}.
	\end{rmk}
	
	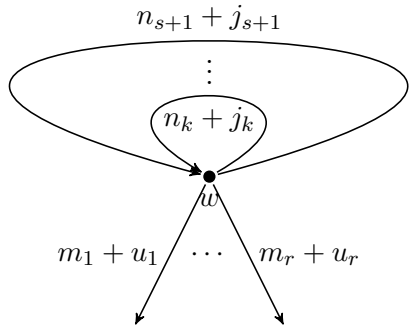
\begin{figure}[htb]
		\[\begin{tikzpicture}[modal]
			\node[point] (v) [label=below:{$w$}] {};
			\path[->] (v) edge[reflexive=50,in=150,out=30] node[below]{$n_k+j_k$} (v);
			\path[->] (v) edge[reflexive=150,in=165,out=15] node[above]{$n_{s+1}+j_{s+1}$} (v);
			\path[->] (v) edge[] node[left]{$m_1+u_1$} node[right]{$\ \cdots$}(-1,-2);
			\path[->] (v) edge[] node[right]{$m_{r}+u_{r}$} (1,-2);
			\node (dot) at (0,1.5) {$\vdots$};
		\end{tikzpicture}\]
		\caption{Quotient graph of Figure \ref{figure 1}}
		\label{figure 2}
	\end{figure}

	\begin{proof}[Proof of Lemma \ref{Summation of residual graph terms}]
		Let $z = z_v - z_w$, by Corollary \ref{loop formula of propagator}, one has
		\begin{align*}
			&\op{Res}_{z_v=z_w} \kuohao{ (z_v-z_w) \cdot \prod_{i=1}^{k} \pds{z_v}{n_i}\widehat{P}(z_v-z_w) \cdot \prod_{i=1}^{r} \pds{z_{v}}{m_i}\widehat{P}(z_{v}-z_{t(e_i)})dz_v}\\
			 = & \op{Res}_{z=0}\kuohao{ z \cdot \prod_{i=1}^{k}\kuohao{(-1)^{n_i}\frac{(n_i+1)!}{z^{n_i+2}} + \sum_{j=0}^{\infty}\frac{W_{n_i+j}}{j!}z^j} \cdot \prod_{i=1}^{r} \pds{z}{m_i}\widehat{P}(z + z_w -z_{t(e_i)})dz}.
		\end{align*}
		Since
		\begin{equation} \label{Other edges connecting v}
			\prod_{i=1}^{r} \pds{z}{m_i}\widehat{P}(z + z_w -z_{t(e_i)}).
		\end{equation}
		is holomorphic at $z=0$. By Proposition \ref{Properties of regularized integral} , the residual formula can be computed by
		\[D_{v,w} \kuohao{\prod_{i=1}^{r} \pds{z}{m_i}\widehat{P}(z + z_w -z_{t(e_i)})} \bigg|_{z=0},\] 
		where $D_{v,w}$ is a differential operator determined by the singular part of 
		\begin{equation}\label{Residual formula w.r.t v and w}
			z \cdot \prod_{i=1}^{k}\kuohao{(-1)^{n_i}\frac{(n_i+1)!}{z^{n_i+2}} + \sum_{j=0}^{\infty}\frac{W_{n_i+j}}{j!}z^j}.
		\end{equation}
		If we expand the brackets in \eqref{Residual formula w.r.t v and w}, one of the terms has the following formula
	
		\begin{equation}\label{One of the term}
			(-1)^{n_1+\cdots+n_s}\frac{(n_1+1)!\cdots (n_s+1)!}{j_{s+1}! \cdots j_{k}!}W_{j_{s+1}+n_{s+1}} \cdots W_{j_k+n_k} z^{-(l +1)},
		\end{equation}
		where \[l = n_1+\cdots+n_s+2s-2-j_{s+1}-\cdots-j_k .\]
		If $l \geq 0$, then \eqref{One of the term} is a singular term and it can transform to the differential operator
		\[(-1)^{n_1+\cdots+n_s}\frac{(n_1+1)!\cdots (n_s+1)!}{j_{s+1}! \cdots j_{k}! \cdot l!}W_{j_{s+1}+n_{s+1}} \cdots W_{j_k+n_k}\partial^{l}.\]
		By Leibniz's rule,
		\[\partial^{l}\prod_{i=1}^{r} \pds{z}{m_i}\widehat{P}(z + z_w -z_{t(e_i)}) = \sum_{u_1,\cdots,u_r \in \bbn \atop u_1+\cdot+u_r=l} \frac{l!}{u_1!\cdots u_r!}\prod_{i=1}^{r} \pds{z}{m_i+u_i}\widehat{P}(z + z_w -z_{t(e_i)}).\]
		Therefore, the residual formula is the summation of terms like \eqref{Residual graph}, with coefficients 
		\[(-1)^{n_1+\cdots+n_s}\frac{(n_1+1)!\cdots(n_s+1)!}{j_{s+1}!\cdots j_{k}!u_1!\cdots u_r!},\]
		which completes the proof.
	\end{proof}

	To write the residual formula more precisely, we will introduce some necessary notations and definitions in next section.

	\subsection{$k$-shifted Residual Graph}			
	\begin{defn}
		Let $A$ be a set, and $m$ be an integer. A function 
		$u \colon A \to \bbn$
		is an assignment of $m$ to $A$ if 
		\[\sum_{a \in A} u(a) = m.\]
		And the set of assignments of $m$ to $A$ is denoted by $P(A, m)$.
	\end{defn}
		
	\begin{defn}
		Let $A \subset E(\Gamma),E \subset E(\Gamma) \cup L(\Gamma)$ be two non-empty edge sets of a decorated graph $(\Gamma,n)$. For $u \colon E \to \bbn$, $(\Gamma, n+u)$ is a decorated graph with a new decoration 
		\[(n+u)(e) = \begin{cases}
			n(e) \quad &e \notin E,\\
			n(e) + u(e) \quad & e \in E.
		\end{cases}\]
		And $(\Gamma/A, n)$ is a quotient graph given by $\Gamma$ collapsing $A$. The decoration on $\Gamma/A$ is a restriction map
		\[n \colon (E(\Gamma) \cup L(\Gamma)) \setminus A \to \bbn.\]
	\end{defn}
	
	\begin{defn}
		Let $v, w \in V(\Gamma)$ be two vertices in a decorated graph $(\Gamma,n)$. Let
		\[E^{+}_{v} = \{e \in E(\Gamma) \mid h(e) = v\}\]
		and 
		\[E^{-}_{v} = \{e \in E(\Gamma) \mid t(e) = v\},\]
		be sets of edges whose head (tail) is $v$. 
		
		Let $E_v = E_v^+ \cup E_v^-$ be the set of edges attaching $v$ and let $E_{v, w} = E_v \cap E_w$ be the set of edges connecting $v$ and $w$.	
	\end{defn}
	
	\begin{defn}
		Let $v, w$ be two vertices in a decorated graph $(\Gamma, n)$ with $E_{v, w} \neq \emptyset$ and let $l$ be an integer. For a nonempty set $A \subset E_{v, w}$ and an assignment function $u \in P(E_{v} \setminus A, l)$, the \emph{collapse coefficient of $\Gamma$ with respect to $A$ and $u$} is defined by
		\[C_{A,u} = (-1)^{c_{A,u}} \cdot \frac{\prod_{e\in A}(n_e+1)!}{\prod_{e \in E_v \setminus A}u_e!},\]
		where
		\[c_{A,u} = \omega(E_v^- \cap E_{v,w})+\omega(A)+\sum_{e \in E_v^- \setminus E_{w}}u_e.\]
	\end{defn}
	
	\begin{defn}
		The set of decorated graphs is denoted by $G$, and let $V(G)$ be the $\bbc$-linear space spanned by $G$.
	\end{defn}
	
	\begin{defn} \label{Definition of k-shifted residual graph}
		Let $(\Gamma,n)$ be a decorated graph, and let $v, w$ be vertices in $\Gamma$ such that $E_{v, w} \neq \emptyset$. For a nonempty set $A \subset E_{v, w}$ and a integer $k$, the \emph{$k$-shifted residual graph with respect to $A$ is defined by}
		\[\op{Res}_{w}^{(v)}[k](\Gamma,n;A) := \sum_{u \in P(E_v\setminus A, \omega(A)-1-k)} C_{A,u}\cdot(\Gamma/A,n+u).\]
		The \emph{$k$-shifted residual graph of $(\Gamma,n)$ with $v$ collapsing to $w$} is defined by
		\[\op{Res}_{w}^{(v)}[k](\Gamma,n) := \sum_{A \subset E_v \atop A \neq \emptyset} \op{Res}_{w}^{(v)}[k](\Gamma,n;A).\] 
	\end{defn}
	\begin{rmk}
		If $E_{v, w} = \emptyset$, the $k$-shifted residual graph of $(\Gamma,n)$ with $v$ collapsing to $w$ is $0 \in V(G)$.
	\end{rmk}
	
	\begin{prop}
		let $(\Gamma,n)$ be a decorated graph and let $k$ be an integer, then
		\[\Phi_{\op{Res}_w^{(v)}[k](\Gamma,n)} = (-1)^{k+1}\Phi_{\op{Res}_{v}^{(w)}[k](\Gamma,n)}.\]
	\end{prop}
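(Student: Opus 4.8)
The plan is to reduce the identity to the symmetry of a residue under exchanging the two collapsing variables. The first ingredient is the residue interpretation of the $k$-shifted residual graph. I would repeat the computation in the proof of Lemma \ref{Summation of residual graph terms} verbatim, only replacing the factor $(z_v-z_w)$ by its $k$-th power, to obtain
\[\Phi_{\op{Res}_w^{(v)}[k](\Gamma,n)}=\op{Res}_{z_v=z_w}\bigl((z_v-z_w)^k\,\Phi_{(\Gamma,n)}\,dz_v\bigr),\]
where, exactly as in that lemma, each edge of $A$ collapsed by the residue becomes a loop carrying the weight $W_{n_e}$ of Lemma/Definition \ref{Contribution of loops}. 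The extra power $z^{k-1}$ multiplying the product of the Laurent expansions of the propagators (Corollary \ref{loop formula of propagator}) simply lowers the order of the resulting differential operator by $k-1$; this is precisely the degree shift encoded by the constraint $\sum_e u_e=\omega(A)-1-k$ in Definition \ref{Definition of k-shifted residual graph}, while the principal parts and the orientations of the edges at $v$ reproduce the collapse coefficient $C_{A,u}$ together with its sign $(-1)^{c_{A,u}}$. Interchanging the roles of $v$ and $w$ gives in the same way
\[\Phi_{\op{Res}_v^{(w)}[k](\Gamma,n)}=\op{Res}_{z_w=z_v}\bigl((z_w-z_v)^k\,\Phi_{(\Gamma,n)}\,dz_w\bigr).\]

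The second ingredient is the symmetry of the residue under $v\leftrightarrow w$, which upgrades the remark following Theorem \ref{Si Li's main theorem} from $k=1$ to all $k$. Expanding $\Phi_{(\Gamma,n)}=\sum_n c_n(z_v-z_w)^n$ transverse to the diagonal $z_v=z_w$, the residue in $z_v$ of $(z_v-z_w)^k\Phi_{(\Gamma,n)}\,dz_v$ is the coefficient $c_{-k-1}$. Rewriting each term as $c_n(z_v-z_w)^n=(-1)^n c_n(z_w-z_v)^n$ and multiplying by $(z_w-z_v)^k$, the coefficient of $(z_w-z_v)^{-1}$ equals $(-1)^{k+1}c_{-k-1}$, since one needs $n+k=-1$. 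Hence
\[\op{Res}_{z_w=z_v}\bigl((z_w-z_v)^k\Phi_{(\Gamma,n)}dz_w\bigr)=(-1)^{k+1}\op{Res}_{z_v=z_w}\bigl((z_v-z_w)^k\Phi_{(\Gamma,n)}dz_v\bigr),\]
which for $k=1$ is exactly the symmetry already recorded in the text.

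Combining the two ingredients yields $\Phi_{\op{Res}_v^{(w)}[k](\Gamma,n)}=(-1)^{k+1}\Phi_{\op{Res}_w^{(v)}[k](\Gamma,n)}$, and since $(-1)^{k+1}$ squares to $1$ this is equivalent to the stated identity. I expect the first ingredient to be the only real work: one must track, through the generalized computation of Lemma \ref{Summation of residual graph terms}, that the power $(z_v-z_w)^k$ reproduces the collapse coefficients $C_{A,u}$ exactly — the factorials $\prod_{e\in A}(n_e+1)!$ from the principal parts, the parity $c_{A,u}$ from the edge orientations, and the summation range $\sum_e u_e=\omega(A)-1-k$ for the assignment $u$ (together with the weights $W_{n_e}$ attached to the edges collapsed into loops). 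By comparison, the residue symmetry is a single change of variable $z_v-z_w\mapsto -(z_w-z_v)$, and the final assembly is purely formal.
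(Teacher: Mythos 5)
Your first ingredient is essentially sound: Definition \ref{Definition of k-shifted residual graph} is engineered so that the computation of Lemma \ref{Summation of residual graph terms}, run with $(z_v-z_w)^k$ in place of $(z_v-z_w)$, yields $\op{Res}_{z_v=z_w}\bigl((z_v-z_w)^k\Psi_{(\Gamma,n)}\,dz_v\bigr)=\Psi_{\op{Res}_w^{(v)}[k](\Gamma,n)}$ (note that it is $\Psi$, not $\Phi$, that comes out naturally, since the uncollapsed edges of $E_{v,w}$ become loops carrying factors $W_{n_e+j_e}$; this already mismatches the statement, which is phrased with $\Phi$). The paper gives no argument of its own, so there is nothing to compare against. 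The genuine gap is in your second ingredient. In the expansion $\Phi_{(\Gamma,n)}=\sum_n c_n(z_v-z_w)^n$ the coefficients $c_n$ are not constants: they are functions of the spectator variable ($z_w$ if you expand in $z_v$, and vice versa). To compute $\op{Res}_{z_w=z_v}$ you must re-expand $c_n(z_w)=c_n\bigl(z_v+(z_w-z_v)\bigr)$, and every pole of order higher than $k+1$ then contributes a term $(-1)^n c_n^{(-1-k-n)}(z_v)/(-1-k-n)!$ with $n<-1-k$ that your manipulation silently discards. These terms do not vanish in general, so the claimed residue symmetry is false.

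The failure is not repairable within your scheme, because the Proposition itself fails pointwise. Take $\Gamma$ with one edge $e_1$ from $v$ to $w$ of decoration $1$ and one edge $e_2$ from $w$ to $u$ of decoration $0$, so $\Phi_{(\Gamma,n)}=\partial_{z_v}\widehat{P}(z_v-z_w)\,\widehat{P}(z_w-z_u)$, and take $k=1$. Since $\partial\widehat{P}(t)=-2t^{-3}+O(1)$ has no $t^{-2}$ term, $\op{Res}_{z_v=z_w}\bigl((z_v-z_w)\Phi_{(\Gamma,n)}dz_v\bigr)=0$, and correspondingly $\op{Res}_w^{(v)}[1](\Gamma,n)=0$ because $P(\emptyset,1)=\emptyset$. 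But expanding around $z_w=z_v$ gives $\op{Res}_{z_w=z_v}\bigl((z_w-z_v)\Phi_{(\Gamma,n)}dz_w\bigr)=2\,\partial\widehat{P}(z-z_u)$ ($z$ the merged coordinate), which matches $\op{Res}_v^{(w)}[1](\Gamma,n)=2\cdot(\text{one edge of decoration }1)$ computed from the definition of $C_{A,u}$. Since $(-1)^{k+1}=1$ here, the asserted identity would force $2\,\partial\widehat{P}(z-z_u)=0$. What survives, and what the proof of Theorem \ref{Main Theorem} actually uses, is the identity after applying $W$: in this example the discrepancy $2\partial_z\widehat{P}(z-z_u)$ is a total $\partial$-derivative and integrates to zero by Proposition \ref{Properties of regularized integral}. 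So a correct treatment must either restrict the hypotheses so that the extra Taylor terms cancel, or downgrade the statement to $W(\op{Res}_w^{(v)}[k](\Gamma,n))=(-1)^{k+1}W(\op{Res}_v^{(w)}[k](\Gamma,n))$ and prove it under the regularized integral.
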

	\begin{proof}
		It's straightforward to check.
	\end{proof}
	
	\begin{defn}
		Let $\gamma = \sum c_i(\Gamma_i,n_i)$ be an element in $V(G)$, then we will use $\Phi_{\gamma}$ to define the linear combination of the functions $\Phi_{(\Gamma_i,n_i)}$:
		\[\Phi_{\gamma} := \sum c_i\Phi_{(\Gamma_i,n_i)}.\]
	\end{defn}
	
	\begin{lem} \label{Second part of differential}
		Let $(\Gamma,n)$ be a decorated graph, and let $v, w$ be two vertices in $\Gamma$ with $E_{v, w} \neq \emptyset$. Then 
		\[\op{Res}_{z_v =z_w}\kuohao{(z_v-z_w)\Psi_{(\Gamma,n)}dz_v} = \Psi_{\op{Res}_{w}^{(v)}[1](\Gamma,n)}.\]
	\end{lem}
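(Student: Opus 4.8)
The plan is to reduce the statement to the explicit residue computation already carried out in Lemma \ref{Summation of residual graph terms} and then match its output term by term against the combinatorial definition of $\op{Res}_{w}^{(v)}[1](\Gamma,n)$ from Definition \ref{Definition of k-shifted residual graph}. First I would dispose of the loops of $\Gamma$. By Definition \ref{Definition of R.I} every loop $e\in L(\Gamma)$ contributes only the constant factor $W_{n_e}$, so $\Psi_{(\Gamma,n)}=\Phi_{(\Gamma,n)}\prod_{e\in L(\Gamma)}W_{n_e}$ with the product independent of all $z$-variables; it therefore pulls out of the residue. On the other side, each quotient graph $(\Gamma/A,n+u)$ has loop set equal to $L(\Gamma)$ together with the edges of $E_{v,w}\setminus A$, which become loops once $A$ is collapsed, so $\Psi_{(\Gamma/A,n+u)}=\Phi_{(\Gamma/A,n+u)}\kuohao{\prod_{e\in L(\Gamma)}W_{n_e}}\kuohao{\prod_{e\in E_{v,w}\setminus A}W_{(n+u)(e)}}$. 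After cancelling the common factor $\prod_{e\in L(\Gamma)}W_{n_e}$ it suffices to prove
\[
\op{Res}_{z_v=z_w}\kuohao{(z_v-z_w)\Phi_{(\Gamma,n)}\,dz_v} = \sum_{\emptyset\neq A\subset E_{v,w}}\ \sum_{u\in P(E_v\setminus A,\,\omega(A)-2)} C_{A,u}\,\Phi_{(\Gamma/A,n+u)}\prod_{e\in E_{v,w}\setminus A}W_{(n+u)(e)}.
\]

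Second, I would treat the special orientation in which $v$ is the head of every edge incident to it, which is exactly the hypothesis of Lemma \ref{Summation of residual graph terms}. There the residue is a sum indexed by which of the $E_{v,w}$-edges contribute their singular part -- this choice is precisely the collapsed set $A$ -- together with nonnegative shifts on the remaining edges. The uncollapsed edges of $E_{v,w}$ produce the regular factors $W_{(n+u)(e)}/u_e!$ of Corollary \ref{loop formula of propagator} (these are the new loops, hence the appearance of $\Psi$ rather than $\Phi$ on the right), while the residual differential operator, expanded by Leibniz, distributes the leftover order among the edges of $E_v\setminus E_{v,w}$, producing the factors $\partial^{m_e+u_e}\widehat{P}$ of $\Phi_{(\Gamma/A,n+u)}$. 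A bookkeeping of powers of $z$ shows the total shift is $\omega(A)-2=\omega(A)-1-k$ with $k=1$, so the shift data is exactly an assignment $u\in P(E_v\setminus A,\omega(A)-2)$; matching factorials gives $\prod_{e\in A}(n_e+1)!\big/\prod_{e\in E_v\setminus A}u_e!$, and since $E_v^-=\emptyset$ here the sign $(-1)^{\omega(A)}$ coincides with the lemma's $(-1)^{n_1+\cdots+n_s}$. Thus the special case is precisely $C_{A,u}$ with $c_{A,u}=\omega(A)$.

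Third, I would remove the orientation assumption. Because $\widehat{P}$ is even, reversing an edge $e$ incident to $v$ replaces $\partial_{z_{h(e)}}^{n_e}\widehat{P}(z_{h(e)}-z_{t(e)})$ by $(-1)^{n_e}\partial_z^{n_e}\widehat{P}(z)$ in the variable $z=z_v-z_w$, and each $\partial_z$ falling on a factor $\widehat{P}(z_{h(e)}-z_w-z)$ (an edge of $E_v^-\setminus E_w$) carries an extra $-1$. Tracking these through the three edge types gives absolute signs $(-1)^{n_e}$ for collapsed edges in $E_v^+$, $+1$ for collapsed edges in $E_v^-$, $+1$ resp. $(-1)^{n_e}$ for uncollapsed $E_{v,w}$-edges in $E_v^+$ resp. $E_v^-$, and $+1$ resp. $(-1)^{u_e}$ for the remaining $v$-edges in $E_v^+$ resp. $E_v^-$. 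A short mod-$2$ computation, using $A\subset E_{v,w}=(E_{v,w}\cap E_v^+)\sqcup(E_{v,w}\cap E_v^-)$, collapses the total sign exponent to $\omega(E_v^-\cap E_{v,w})+\omega(A)+\sum_{e\in E_v^-\setminus E_w}u_e=c_{A,u}$, so the general coefficient is exactly $C_{A,u}$.

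The main obstacle is this last sign bookkeeping: one must verify that the per-edge signs forced by the evenness of $\widehat{P}$ and by each application of $\partial_z$ reassemble, modulo $2$, into the exact combination $c_{A,u}$ recorded in the collapse coefficient. Everything else -- the factorial coefficients, the identification of $A$ with the singular-part choice, the distribution of the residual order as the assignment $u$, and the conversion of uncollapsed $E_{v,w}$-edges into the loop factors $W_{(n+u)(e)}$ -- is a direct transcription of Lemma \ref{Summation of residual graph terms} and Corollary \ref{loop formula of propagator}, together with part $2$ of Proposition \ref{Properties of regularized integral} for the residue extraction.
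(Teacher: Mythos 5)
Your proposal is correct and follows the same route as the paper, whose proof of this lemma is simply the one-line reduction to Lemma \ref{Summation of residual graph terms}. You additionally spell out the details the paper leaves implicit --- factoring out the loop contributions, identifying the uncollapsed $E_{v,w}$-edges with the new loop factors $W_{(n+u)(e)}$, and the mod-$2$ sign check for general edge orientations (the paper only treats the case where $v$ is the head of every incident edge ``without loss of generality'') --- and that sign computation does reassemble correctly into $c_{A,u}$.
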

	\begin{proof}
		This follows from Lemma \ref{Summation of residual graph terms}. 
	\end{proof}

	\begin{rmk}\label{Symmetry}
		It is easy to check that $\Phi_{\op{Res}_w^{(v)}[k](\Gamma,n)} = (-1)^{k+1}\Phi_{\op{Res}_{v}^{(w)}[k](\Gamma,n)}$.
	\end{rmk}

	\subsection{The Holomorphic Anomaly Equation on Graphs}
	\begin{lem-defn}
		Suppose $(\Gamma,n)$ is a decorated graph, let
		\[\delta(\Gamma,n) = \sum_{e \in E \cup L \atop n_e=0}(\Gamma \setminus e,n) - \frac{1}{2}\sum_{v,w \in V \atop v \neq w} \op{Res}_{w}^{(v)}[1](\Gamma,n).\]
		
		Then $(V(G),\cup,\delta)$ is a differential ring, where $\cup$ is the product on $V(G)$ induced by the union of graphs
		$\cup \colon G \times G \to G.$
	\end{lem-defn}
	\begin{proof}
		Obvious.
	\end{proof}
	
	\begin{thm}\label{Main Theorem}
		The following diagram commutes.
		\[\begin{tikzcd}
		V(G) \arrow[r,"W"] \arrow[d,"\delta"] &  \bbc[E_4,E_6] \arrow[d,"\partial_{\Y}"] [\widehat{E}_2]\\
		V(G) \arrow[r,"W"] & \bbc[E_4,E_6][\widehat{E}_2]
	\end{tikzcd}\]
	\end{thm}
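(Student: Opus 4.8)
The identity to be proved is the commutativity $\partial_{\Y}\circ W = W\circ\delta$ as maps $V(G)\to\bbc[E_4,E_6][\widehat{E}_2]$. Since $W$, $\delta$ and $\partial_{\Y}$ are all $\bbc$-linear, it suffices to check $\partial_{\Y}W(\Gamma,n)=W(\delta(\Gamma,n))$ for a single decorated graph $(\Gamma,n)$ with $m$ vertices. Writing $W(\Gamma,n)=\widetilde{W}(\Gamma,n)\prod_{e\in L}W_{n_e}$ as in Definition \ref{Definition of R.I}, the plan is to apply the Leibniz rule to separate the derivative of the ``body'' $\widetilde{W}(\Gamma,n)$ from that of the loop factors:
\[\partial_{\Y}W(\Gamma,n)=\kuohao{\partial_{\Y}\widetilde{W}(\Gamma,n)}\prod_{e\in L}W_{n_e}+\widetilde{W}(\Gamma,n)\,\partial_{\Y}\kuohao{\prod_{e\in L}W_{n_e}}.\]

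The loop term is dispatched first. From Lemma/Definition \ref{Contribution of loops} and Example \ref{The value of Wk's} one reads off $W_0=\tfrac{\pi^2}{3}E_2+\Y$, so $\partial_{\Y}W_0=1$, while for $k>0$ the value $W_k=(k+1)!G_{k+2}$ is a holomorphic multiple of an Eisenstein series and hence $\partial_{\Y}W_k=0$. Thus only undecorated loops survive the derivative, and since deleting a loop $e$ changes neither $E(\Gamma)$ nor the number of vertices, we have $\widetilde{W}(\Gamma\setminus e,n)=\widetilde{W}(\Gamma,n)$; this rewrites the loop term as $\sum_{e\in L,\,n_e=0}W(\Gamma\setminus e,n)$, matching the loop part of the first sum in $\delta$.

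For the body term I would invoke the regularized-integral holomorphic anomaly equation (Theorem \ref{Si Li's main theorem}) applied to $\Phi_{(\Gamma,n)}$, expressing $\partial_{\Y}\widetilde{W}(\Gamma,n)$ as the integral of $\partial_{\Y}\Phi_{(\Gamma,n)}$ minus a sum of residues. The first piece is handled by Lemma \ref{First part of differential}, which gives $\partial_{\Y}\Phi_{(\Gamma,n)}=\sum_{e\in E,\,n_e=0}\Phi_{(\Gamma\setminus e,n)}$; after integrating and restoring the (unchanged) loop factors this contributes $\sum_{e\in E,\,n_e=0}W(\Gamma\setminus e,n)$, the edge part of the first sum in $\delta$. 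The residue piece is handled by Lemma \ref{Second part of differential}: since the loop factors are constant in the $z$-variables, multiplying them through converts each residue $\op{Res}_{z_{v_a}=z_{v_b}}\kuohao{(z_{v_a}-z_{v_b})\Phi_{(\Gamma,n)}dz_{v_a}}$ into $\Psi_{\op{Res}_{v_b}^{(v_a)}[1](\Gamma,n)}$, whose regularized integral is exactly $W(\op{Res}_{v_b}^{(v_a)}[1](\Gamma,n))$.

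It then remains to reconcile the ordered sum $\sum_{a<b}$ produced by the anomaly equation with the symmetric sum $\tfrac12\sum_{v\neq w}$ appearing in $\delta$. For this I would use the $k=1$ case of Remark \ref{Symmetry}, namely $\Phi_{\op{Res}_w^{(v)}[1](\Gamma,n)}=\Phi_{\op{Res}_v^{(w)}[1](\Gamma,n)}$, together with the observation that collapsing $v$ onto $w$ or $w$ onto $v$ yields the same loop structure; hence $W(\op{Res}_{v_b}^{(v_a)}[1](\Gamma,n))=W(\op{Res}_{v_a}^{(v_b)}[1](\Gamma,n))$ and $\sum_{a<b}=\tfrac12\sum_{a\neq b}$, producing exactly $-\tfrac12\sum_{v\neq w}W(\op{Res}_w^{(v)}[1](\Gamma,n))$. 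Assembling the three groups of terms yields $\partial_{\Y}W(\Gamma,n)=W(\delta(\Gamma,n))$. \textbf{The main obstacle} I anticipate is not any single estimate but the loop bookkeeping in the residue step: the operator $\op{Res}_w^{(v)}[1]$ turns the uncollapsed edges of $E_{v,w}$ into new loops on the merged vertex, and one must verify that the factors $W_{j_{s+1}+n_{s+1}}\cdots W_{j_k+n_k}$ generated in Lemma \ref{Summation of residual graph terms} are precisely the loop contributions $\prod W_{n_e}$ attached to those new loops, so that the residue genuinely lands in the image of $W$ and not merely in that of $\widetilde{W}$. The edge case $m=1$ (no edges, only loops) should be noted separately, since Theorem \ref{Si Li's main theorem} requires $n\geq2$; there $\widetilde{W}\equiv1$ is constant and the identity follows from the loop term alone.
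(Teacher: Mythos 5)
Your proposal follows essentially the same route as the paper's proof: the Leibniz decomposition of $\partial_{\Y}\bigl(\widetilde{W}(\Gamma,n)\prod_{e\in L}W_{n_e}\bigr)$, Theorem \ref{Si Li's main theorem} for the body term, Lemma \ref{First part of differential} and Lemma \ref{Second part of differential} for the two pieces, and Remark \ref{Symmetry} to pass from $\sum_{a<b}$ to $\tfrac12\sum_{v\neq w}$. You additionally spell out details the paper leaves implicit (the computation $\partial_{\Y}W_0=1$, $\partial_{\Y}W_k=0$ for $k>0$, the loop bookkeeping in the residue step, and the degenerate case $m=1$), but the argument is the same.
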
	
	
	\begin{proof}
		For a decorated graph $(\Gamma,n)$, one has
		\begin{align*}
			\partial_{\Y}W(\Gamma,n) &= \partial_{\Y} \kuohao{\widetilde{W}(\Gamma,n) \cdot \prod_{e\in L(\Gamma)}W_{n_e}}\\
			& = \prod_{e \in L}W_{n_e} \cdot \partial_{\Y}\widetilde{W}(\Gamma,n) + \partial_{\Y} \prod_{e \in L}W_{n_e} \cdot \widetilde{W}(\Gamma,n)\\
			&= \prod_{e \in L} W_{n_e} \cdot \left( \Xint-_{E_{\tau}^m} \partial_{\Y} \Phi_{(\Gamma,n)} \op{vol}  - \sum_{a < b} \op{Res}_{z_{v_a} = z_{v_b}}\kuohao{(z_{v_a}-z_{v_b})\Phi_{(\Gamma,n)} dz_{v_a}}\right) + \partial_{\Y} \prod_{e \in L}W_{n_e} \cdot \widetilde{W}(\Gamma,n),
		\end{align*}
		by Lemma \ref{First part of differential},
		\[\sum_{e \in L \cup E \atop n_e = 1} \Phi_{(\Gamma \setminus e,n)}=\prod_{e \in L} W_{n_e} \cdot \Xint-_{E_{\tau}^m} \partial_{\Y} \Phi_{(\Gamma,n)} \op{vol} + \partial_{\Y} \prod_{e \in L}W_{n_e} \cdot \widetilde{W}(\Gamma,n) ,\]
		and by Lemma \ref{Second part of differential} and Remark \ref{Symmetry}, 
		\[\frac{1}{2} \sum_{v,w \in V \atop v \neq w} \Psi_{\op{Res}_{w}^{(v)}[1](\Gamma,n)} = \prod_{e \in L} W_{n_e} \cdot \sum_{a < b} \op{Res}_{z_{v_a} = z_{v_b}}\kuohao{(z_{v_a}-z_{v_b})\Phi_{(\Gamma,n)} dz_{v_a}}.\]
		Therefore, 
		\[\partial_{\Y}W(\Gamma,n) = W(\delta(\Gamma,n)),\]
		which completes the proof.
	\end{proof}

	\begin{eg}
	Consider the decorated graph $(\Gamma,n)$ as follows
	\[
		\begin{tikzpicture}
			\node[point] (a) [label = below:{$v$}]{};
			\node[point] (b) at (4,0) [label = below:{$w$}] {};
			\path[-latex] (a) edge[bend left] node[above] {0} (b);
			\path[-latex] (a) edge[bend right] node[below] {0} (b);
		\end{tikzpicture}
	\]
	On one hand, 
	\[\partial_{\Y}W(\Gamma,n) = \frac{3}{\pi^2}\partial_{\widehat{E}_2}\kuohao{\pi^4\frac{-\widehat{E}^2_2+E_4}{9}} = -\frac{2\pi^2}{3}\widehat{E}_2.\]
	On the other hand, 
	\[ 
	\begin{tikzpicture}
		\node[] (lhs) at (-1.1,0) {$\delta (\Gamma,n)=$};
		\node[] (coe1) at (-0.1,0) {\Large{2}};
		\node[point] (a1) at (0.2,0) [label = below:{$v$}] {};
		\node[point] (b1) at (1.2,0) [label = below:{$w$}] {};
		\path[-latex] (a1) edge[] (b1);
		\node[] (decoration1) at (0.7,0.2) {\tiny{$0$}};
		\node[] (cde2) at (2.1,0) {\Large{$-\frac{1!}{0!}\binom{2}{1}$}};
		\node[point] (a2) at (3,0) [label = below:{$v$}] {};
		\path[-latex] (a2) edge[reflexive = 20, in = -45, out = 45] (a2);
		\node[] (decoration2) at (3.6,0) {\tiny{$0$}};
	\end{tikzpicture}\]
	and it is easy to compute
	\[W(\delta(\Gamma,n)) = -2W_0 = -\frac{2\pi^2}{3}\widehat{E}_2 = \partial_{\Y}W(\Gamma,n).\]
\end{eg}

\begin{eg}
		Then consider the case
		\[\delta(\begin{tikzpicture}
			\node[point] (v) {};
			\node[point] (w) at (1,0) {};
			\path[->] (v) edge[bend left] (w);
			\path[->] (v) edge[bend right] (w);
			\path[->] (v) edge[] (w);
		\end{tikzpicture})= \binom{3}{1}\begin{tikzpicture}
			\node[point] (v) {};
			\node[point] (w) at (1,0) {};
			\path[->] (v) edge[bend left] (w);
			\path[->] (v) edge[bend right] (w);
		\end{tikzpicture} -\binom{3}{1}\frac{1!}{0!0!}\begin{tikzpicture}
			\node[point] (a) {};
			\path[->] (a) edge[reflexive=25,in=35,out=145] (a);
			\path[->] (a) edge[reflexive=15,in=45,out=135] (a);
		\end{tikzpicture}-\binom{3}{2}\frac{1!1!}{2!}\begin{tikzpicture}
			\node[point] (a) {};
			\path[->] (a) edge[loop] node[above]{$2$} (a);
		\end{tikzpicture}.\]
		On other hand,
		\[\partial_{\Y}W(\begin{tikzpicture}
			\node[point] (v) {};
			\node[point] (w) at (1,0) {};
			\path[->] (v) edge[bend left] (w);
			\path[->] (v) edge[bend right] (w);
			\path[->] (v) edge[] (w);
		\end{tikzpicture}) = \partial_{\Y}\kuohao{\pi^6\frac{-10\widehat{E}_2^3+6\widehat{E}_2E_4+4E_6}{5\cdot 27}} = \pi^4\kuohao{-\frac{2}{3}\widehat{E}_2^2 + \frac{2}{15}E_4} .\]
		\[W(\delta(\begin{tikzpicture}
			\node[point] (v) {};
			\node[point] (w) at (1,0) {};
			\path[->] (v) edge[bend left] (w);
			\path[->] (v) edge[bend right] (w);
			\path[->] (v) edge[] (w);
		\end{tikzpicture})) =3W(\begin{tikzpicture}
			\node[point] (v) {};
			\node[point] (w) at (1,0) {};
			\path[->] (v) edge[bend left] (w);
			\path[->] (v) edge[bend right] (w);
		\end{tikzpicture}) -3W_0^2-\frac{3}{2}W_2 =\pi^4\frac{-\widehat{E}_2^2 + E_4}{3} -\frac{\pi^4}{3}\widehat{E}_2^2-\frac{3}{2} \cdot 3! \cdot \frac{\pi^4}{45}E_4,\]
		and it's easy to check 
		\[\partial_{\Y}W(\begin{tikzpicture}
			\node[point] (v) {};
			\node[point] (w) at (1,0) {};
			\path[->] (v) edge[bend left] (w);
			\path[->] (v) edge[bend right] (w);
			\path[->] (v) edge[] (w);
		\end{tikzpicture}) = W(\delta(\begin{tikzpicture}
			\node[point] (v) {};
			\node[point] (w) at (1,0) {};
			\path[->] (v) edge[bend left] (w);
			\path[->] (v) edge[bend right] (w);
			\path[->] (v) edge[] (w);
		\end{tikzpicture})).\]
	\end{eg}
	
	\begin{eg}[= Corollary 5.5 in \cite{li2012feynman}]
		Let $(\Gamma,n)$ be a decorated graph such that $L(\Gamma) = \emptyset$, $|E_{v,w}| \leq 1$ for any $v,w \in V(\Gamma)$, and the decoration on each edge is $0$. Then
		\[\partial_{\Y} W(\Gamma,n) = \sum_{e \in E} (W_{\Gamma \setminus e} - W_{\Gamma /e}),\]
		where $\Gamma/e$ is the graph by collapsing the edge $e$ in $\Gamma$ and $\Gamma \setminus e$ is the graph by deleting the edge $e$ in $\Gamma$.
	\end{eg}
	\begin{proof}
		It follows from Theorem \ref{Main Theorem}.
	\end{proof}

	\section{Application: Algorithm to Compute Graph Integral}

We will use iterated integral to give a general algorithm to compute the graph integral. In this section, the decoration of a graph $\Gamma$ (See Definition \ref{Def of decorated graph}) is given by
\[n \colon E(\Gamma) \cup L(\Gamma) \to \bbn \cup \{-1\}.\]

We denote 
\begin{equation} \label{decoration -1}
	\pds{z_{h(e)}}{-1}\widehat{P}(z_{h(e)} - z_{t(e)}) : = -\widehat{Z}(z_{h(e)}-z_{t(e)}),
\end{equation}

\[\Phi_{(\Gamma,n)} = \prod_{e \in E(\Gamma)} \pds{z_{h(e)}}{n_e} \widehat{P}(z_{h(e)}-z_{t(e)}), \quad \Psi_{(\Gamma,n)} = \Phi_{(\Gamma,n)} \prod_{e \in L(\Gamma)}W_{n_e}.\]

\begin{prop}
	The Laurent's series of $-\widehat{Z}$ at $0$ is
	\[\partial^{-1}\widehat{P}(z) = -\frac{1}{z} - \frac{\pi}{\op{Im}\tau} \bar{z} + \sum_{k=0}^{\infty} \frac{W_{k-1}}{k!}z^k,\]
	where $W_{-1} = 0$.
\end{prop}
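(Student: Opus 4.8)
The plan is to expand $\partial^{-1}\widehat{P}(z) = -\widehat{Z}(z,\bar z)$ directly from its definition and then match the resulting coefficients termwise against the closed forms for the $W_k$. By \eqref{decoration -1} together with the definitions $\widehat{Z} = \zeta(z) - \frac{\pi^2}{3}E_2(\tau)z + \A$ and $\A = -\frac{\pi}{\op{Im}\tau}(\bar z - z)$, I would first write
\[\partial^{-1}\widehat{P}(z) = -\zeta(z) + \frac{\pi^2}{3}E_2(\tau)z - \A.\]
Substituting the Laurent expansion $\zeta(z) = \frac{1}{z} - \sum_{k\geq 3}G_{k+1}z^{k}$ from Proposition \ref{Laurent' series 1} produces the principal part $-\frac{1}{z}$ and a holomorphic power series $\sum_{k\geq 3}G_{k+1}z^k$, while $-\A$ contributes the single antiholomorphic monomial in $\bar z$ (the only $\bar z$-dependence, which is annihilated by $\partial_z$) together with a term linear in $z$.

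Next I would collect the coefficient of each power $z^k$ and compare with the asserted coefficient $\tfrac{W_{k-1}}{k!}$. The constant term is immediate: there is no $z^0$ contribution, matching the convention $W_{-1}=0$. For $k\geq 2$ the only source of $z^k$ is the $\zeta$-series, so the coefficient is $G_{k+1}$; using $W_{k-1} = k!\,G_{k+1}$, which is the $j = k-1$ case of the formula $W_j = (j+1)!\,G_{j+2}$ from Lemma/Definition \ref{Contribution of loops} (equivalently Corollary \ref{loop formula of propagator}), this equals exactly $\tfrac{W_{k-1}}{k!}$, so these terms agree termwise. The pole and the linear-in-$\bar z$ term are read off directly from the two displays above.

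The one genuinely interesting case, and the step I expect to carry essentially all the content, is $k=1$. Here the coefficient of $z$ in $-\widehat Z$ is assembled from two different sources: the $\tfrac{\pi^2}{3}E_2(\tau)$ coming from the definition of $Z$, and the $z$-linear part of $-\A$, namely $-\tfrac{\pi}{\op{Im}\tau}$. The claim is that their sum equals $\tfrac{W_0}{1!} = W_0$, and this is precisely the identity $W_0 = \tfrac{\pi^2}{3}\widehat{E}_2 = \tfrac{\pi^2}{3}\big(E_2 - \tfrac{3}{\pi\op{Im}\tau}\big) = \tfrac{\pi^2}{3}E_2 - \tfrac{\pi}{\op{Im}\tau}$ from Lemma/Definition \ref{Contribution of loops}. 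In other words, the whole point is that the deformation encoded in $\A$ is exactly what upgrades the holomorphic $\tfrac{\pi^2}{3}E_2$ to its almost-holomorphic completion $\tfrac{\pi^2}{3}\widehat{E}_2 = W_0$; once this is observed the remaining comparisons are routine and the expansion follows.
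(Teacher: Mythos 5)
Your approach is the right one and is essentially the only available one (the paper states this proposition without proof, as a direct computation from the definitions), and you correctly identify the one nontrivial point: the $z^1$-coefficient $\frac{\pi^2}{3}E_2 - \frac{\pi}{\op{Im}\tau} = \frac{\pi^2}{3}\widehat{E}_2 = W_0$, i.e.\ the $\A$-term is exactly what completes $E_2$ to $\widehat{E}_2$. The $k\geq 2$ comparison via $W_{k-1}=k!\,G_{k+1}$ and the $k=0$ convention are also handled correctly.

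However, there is a discrepancy you gloss over with ``read off directly.'' From your own first display, $-\A = +\frac{\pi}{\op{Im}\tau}(\bar z - z)$, so the antiholomorphic part of $-\widehat{Z}$ is $+\frac{\pi}{\op{Im}\tau}\bar z$, whereas the statement asserts $-\frac{\pi}{\op{Im}\tau}\bar z$. Your computation therefore does \emph{not} terminate in the displayed formula as written; it produces the opposite sign on the $\bar z$ term. This is consistent with the paper's own later assertion that $\partial_{\bar z_v}\widehat{Z}(z_v-z_w) = -\frac{\pi}{\op{Im}\tau}$, which forces $\partial_{\bar z}\bigl(-\widehat{Z}\bigr) = +\frac{\pi}{\op{Im}\tau}$, and with the sign you yourself use for the $z$-linear part of $-\A$ (namely $-\frac{\pi}{\op{Im}\tau}$, which is what makes the $W_0$ identity work). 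So the proposition's $\bar z$ term appears to carry a sign typo, and a complete proof must either flag and correct it or explain why the stated sign is intended; asserting that the term is ``read off directly'' while silently matching the wrong sign is the one genuine gap in the write-up.
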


\begin{defn}
	Let $\widetilde{G}$ be the set of decoration graphs that admitting decoration $-1$, and let $V(\widetilde{G})$ be the $\bbc$-linear space spanned by $\widetilde{G}$. We define the \emph{ anti-holomorphic differential with respect to the vertex $v$ } as follows:
	\[\bar\delta_{v}(\Gamma,n) = \sum_{e \in E_v^+ \atop n_e=-1}(\Gamma \setminus e,n)- \sum_{e \in E_v^- \atop n_e = -1} (\Gamma \setminus e,n).\]
\end{defn}

In fact, $\bar\delta_{v}$ is the graph version of $\partial_{z_v}$.
\begin{prop}
	Let $(\Gamma,n)$ be a decorated graph, and $v \in E(\Gamma)$, then 
		\[\pd{\bar{z}_v}\Phi_{(\Gamma,n)} =  -\frac{\pi}{\op{Im}\tau} \Phi_{\bar{\delta}_v(\Gamma,n)}\]
\end{prop}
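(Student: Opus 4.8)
The plan is to expand $\Phi_{(\Gamma,n)} = \prod_{e\in E(\Gamma)}\pds{z_{h(e)}}{n_e}\widehat{P}(z_{h(e)}-z_{t(e)})$ by the Leibniz rule and to observe that $\pd{\bar{z}_v}$ annihilates all but a handful of factors. First I would isolate the anti-holomorphic content of each factor. For an edge with $n_e\geq 0$ the factor is $\partial^{n_e}\widehat{P}(z_{h(e)}-z_{t(e)})$; since $\widehat{P}(z)=\wp(z)+\tfrac{\pi^2}{3}E_2+\Y$ depends on the positions only holomorphically (its sole anti-holomorphic piece being the position-independent scalar $\Y$), every such factor is killed by $\pd{\bar{z}_v}$. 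The only factors carrying genuine $\bar z$-dependence are those with $n_e=-1$, namely $-\widehat{Z}(z_{h(e)}-z_{t(e)})$, the dependence entering through $\A=-\tfrac{\pi}{\op{Im}\tau}(\bar z-z)$ inside $\widehat{Z}$.

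The core computation is then local to a single $n_e=-1$ factor. Writing $w=z_{h(e)}-z_{t(e)}$, I would use $\pd{\bar{z}_v}w=0$ together with the elementary identity $\pd{\bar z}\widehat{Z}=\pd{\bar z}\A=-\tfrac{\pi}{\op{Im}\tau}$, so that by the chain rule $\pd{\bar{z}_v}\widehat{Z}(w)=\bigl(\pd{\bar w}\widehat{Z}\bigr)\,\pd{\bar{z}_v}\bar w=-\tfrac{\pi}{\op{Im}\tau}\,\pd{\bar{z}_v}\bar w$. The factor $\pd{\bar{z}_v}\bar w=\pd{\bar{z}_v}(\bar z_{h(e)}-\bar z_{t(e)})$ equals $+1$ precisely when $v=h(e)$, i.e.\ $e\in E_v^+$, equals $-1$ precisely when $v=t(e)$, i.e.\ $e\in E_v^-$, and vanishes otherwise. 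Hence differentiating the factor attached to $e$ while leaving the others untouched produces $\pm\tfrac{\pi}{\op{Im}\tau}\,\Phi_{(\Gamma\setminus e,n)}$, where the surviving product of all the other factors is exactly $\Phi_{(\Gamma\setminus e,n)}$, and the sign is dictated by whether $v$ is the head or the tail of $e$.

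Finally I would collect the Leibniz contributions over all edges $e$ incident to $v$ with $n_e=-1$. The \emph{relative} sign between head-edges and tail-edges is governed entirely by $\pd{\bar{z}_v}\bar w=\pm1$, so after factoring out a common scalar $c=\pm\tfrac{\pi}{\op{Im}\tau}$ the result takes the form
\[ \pd{\bar{z}_v}\Phi_{(\Gamma,n)} = c\,\Bigl(\sum_{e\in E_v^+,\,n_e=-1}\Phi_{(\Gamma\setminus e,n)}-\sum_{e\in E_v^-,\,n_e=-1}\Phi_{(\Gamma\setminus e,n)}\Bigr), \]
and the bracket is exactly $\Phi_{\bar\delta_v(\Gamma,n)}$ by the definition of $\bar\delta_v$ together with the linear extension of $\Phi$ to $V(\widetilde{G})$. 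The one genuine obstacle is sign bookkeeping: the overall scalar $c$ is the product of the minus sign built into the convention $\pds{z}{-1}\widehat{P}=-\widehat{Z}$, the minus sign in $\pd{\bar z}\A=-\tfrac{\pi}{\op{Im}\tau}$, and the orientation sign $\pd{\bar{z}_v}\bar w$; one must check that these combine to the stated $-\tfrac{\pi}{\op{Im}\tau}$ while simultaneously reproducing the alternating head/tail structure of $\bar\delta_v$. Everything else is a routine application of the product and chain rules.
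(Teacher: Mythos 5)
Your argument follows the paper's proof exactly in structure: the paper's entire proof is the one-line observation that the claim follows from Leibniz's rule together with $\pd{\bar{z}_v}\widehat{Z}(z_v-z_w)=-\frac{\pi}{\op{Im}\tau}$, and you spell out precisely this, with the head/tail case analysis and the identification of the $n_e=-1$ factors as the only sources of $\bar z_v$-dependence made explicit. So the approach is the same, only more detailed.

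The one step you defer --- determining the overall constant $c$ --- is the only nontrivial content of the proof, and you should not leave it as ``one must check.'' Carrying it out: for $e\in E_v^+$ with $n_e=-1$ the relevant factor is $\pds{z}{-1}\widehat{P}=-\widehat{Z}(z_v-z_{t(e)})$, and applying $\pd{\bar{z}_v}$ gives $(-1)\cdot\bigl(-\frac{\pi}{\op{Im}\tau}\bigr)\cdot(+1)=+\frac{\pi}{\op{Im}\tau}$, while for $e\in E_v^-$ the orientation factor flips this to $-\frac{\pi}{\op{Im}\tau}$. Summing over Leibniz contributions then yields $\pd{\bar{z}_v}\Phi_{(\Gamma,n)}=+\frac{\pi}{\op{Im}\tau}\,\Phi_{\bar\delta_v(\Gamma,n)}$, which is the \emph{opposite} of the stated constant. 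So the check you postpone does not actually close as written: either the sign in the statement, the convention $\pds{z}{-1}\widehat{P}:=-\widehat{Z}$, or the head/tail signs in $\bar\delta_v$ needs adjusting. Note that the paper's own one-line proof quotes only $\pd{\bar{z}_v}\widehat{Z}(z_v-z_w)=-\frac{\pi}{\op{Im}\tau}$ and silently drops the minus sign built into the $\pds{z}{-1}$ convention, so it suffers from the same issue; your more careful bookkeeping is exactly what exposes it, but you stopped one multiplication short of noticing.
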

\begin{proof}
	This follows from Leibniz's rule and the fact that 
	\[\pd{\bar{z}_v} \widehat{Z}(z_v-z_w) = -\frac{\pi}{\op{Im}\tau}.\]
\end{proof}

	\begin{cor}
		Let $(\Gamma,n)$ be a decorated graph, and $v \in E(\Gamma)$, then 
		\[\Xint-_{E_{\tau}} \Phi_{\bar{\delta}_v(\Gamma,n)} \vol{v} = \op{Res}\kuohao{\Phi_{(\Gamma,n)}dz_v}.\]
	\end{cor}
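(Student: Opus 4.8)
The plan is to reduce the identity to the first part of Proposition \ref{Properties of regularized integral} by rewriting the volume-form integrand as a $\bar\partial$-exact $(1,1)$-form in the variable $z_v$. First I would invoke the preceding proposition, which expresses $\pd{\bar z_v}\Phi_{(\Gamma,n)} = -\frac{\pi}{\op{Im}\tau}\Phi_{\bar\delta_v(\Gamma,n)}$, and solve it for $\Phi_{\bar\delta_v(\Gamma,n)} = -\frac{\op{Im}\tau}{\pi}\pd{\bar z_v}\Phi_{(\Gamma,n)}$. Substituting this together with the expansion $\vol{v} = \frac{-dz_v\wedge d\bar z_v}{2i\op{Im}\tau}$, the two factors of $\op{Im}\tau$ cancel and one is left with
\[\Phi_{\bar\delta_v(\Gamma,n)}\vol{v} = \frac{1}{2\pi i}\,\pd{\bar z_v}\Phi_{(\Gamma,n)}\;dz_v\wedge d\bar z_v.\]

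Next I would recognize the right-hand side as $\bar\partial$-exact. Treating every variable other than $z_v$ as a fixed parameter, set $\xi = \Phi_{(\Gamma,n)}\,dz_v \in \cala^{1,0}(E_\tau,*D)$, where $D$ is the locus on which $z_v$ collides with a neighboring vertex variable. Then $\bar\partial\xi = \pd{\bar z_v}\Phi_{(\Gamma,n)}\,d\bar z_v\wedge dz_v = -\,\pd{\bar z_v}\Phi_{(\Gamma,n)}\,dz_v\wedge d\bar z_v$, so the integrand above equals $-\frac{1}{2\pi i}\bar\partial\xi$. Applying the first part of Proposition \ref{Properties of regularized integral} with $\Sigma = E_\tau$ (compact and without boundary), namely $\Xint-_{E_\tau}\bar\partial\xi = -2\pi i\,\op{Res}_{E_\tau}(\xi)$, I obtain
\[\Xint-_{E_\tau}\Phi_{\bar\delta_v(\Gamma,n)}\vol{v} = -\frac{1}{2\pi i}\kuohao{-2\pi i\,\op{Res}(\Phi_{(\Gamma,n)}dz_v)} = \op{Res}\kuohao{\Phi_{(\Gamma,n)}dz_v},\]
which is the claimed equality.

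Since the entire argument is a single substitution followed by one application of the residue formula, there is no serious obstacle; the work is essentially bookkeeping of the constants $2\pi i$ and $\op{Im}\tau$ and of the sign produced by $d\bar z_v\wedge dz_v = -\,dz_v\wedge d\bar z_v$. The only points demanding genuine care are (i) checking that $\xi = \Phi_{(\Gamma,n)}\,dz_v$ really lies in $\cala^{1,0}(E_\tau,*D)$, i.e. that as a form in $z_v$ it has only holomorphic poles along the diagonal; this holds because each propagator $\pds{z}{n_e}\widehat P$ (and $-\widehat Z$ in the case $n_e=-1$, by \eqref{decoration -1}) has a holomorphic principal part, so $\Phi_{(\Gamma,n)}$ has at worst holomorphic poles in $z_v$ exactly at the points $z_v = z_w$ with $w$ adjacent to $v$; and (ii) matching the symbol $\op{Res}$ in the statement with $\op{Res}_{E_\tau}$ of Proposition \ref{Properties of regularized integral}, so that it denotes the sum of the $z_v$-residues over precisely this diagonal locus.
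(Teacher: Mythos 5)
Your proof is correct and follows essentially the same route as the paper: both rewrite $\Phi_{\bar\delta_v(\Gamma,n)}\vol{v}$ as $-\frac{1}{2\pi i}\bar\partial_{z_v}\kuohao{\Phi_{(\Gamma,n)}dz_v}$ via the preceding proposition and then apply the residue formula $\Xint-_{E_\tau}\bar\partial\xi = -2\pi i\op{Res}(\xi)$ for $\xi\in\cala^{1,0}(E_\tau,*D)$. The only difference is that you correctly attribute this last step to the first part of Proposition \ref{Properties of regularized integral}, whereas the paper's proof cites Lemma \ref{Residual formulas for Z-hat}, which appears to be a misplaced reference.
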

	\begin{proof}
		Since
		\[\bar{\partial}_{z_v} \kuohao{\Phi_{(\Gamma,n)}dz_v} = \frac{\pi}{\op{Im}\tau} \Phi_{\bar{\delta}_v (\Gamma,n)} dz_v \wedge d\bar{z}_v,\]
		by Lemma \ref{Residual formulas for Z-hat}, we have
		\[\Xint-_{E_{\tau}}  \Phi_{\bar{\delta}_v(\Gamma,n)} \vol{v} = -\frac{1}{2\pi i} \Xint-_{E_{\tau}}\bar{\partial}_{z_v} \kuohao{\Phi_{(\Gamma,n)}dz_v} = \op{Res}\kuohao{\Phi_{(\Gamma,n)}dz_v},\]
		which completes the proof.
	\end{proof}
	
	For $v,w \in V(\Gamma)$, let 
	$\widehat{Z}_{v,w} : = \widehat{Z}(z_v-z_w),$
	then $\bar{\partial}_{z_v}$ is a differential on the polynomial ring $R$ generated by $\widehat{Z}_{v,w}$.    

	\begin{lem} \label{delta bar inverse}
		The differential map $\bar{\partial}_{z_v} \colon R \to R$ is surjective.
	\end{lem}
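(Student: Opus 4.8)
The plan is to put $\bar{\partial}_{z_v}$ into the normal form of a single scaled partial derivative by an invertible change of polynomial generators, and then to invert it by the obvious antiderivative.

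First I would record the action of $\bar{\partial}_{z_v}$ on the generators of $R$. Using $\bar{\partial}_{z_v}\widehat{Z}(z_v - z_w) = \Y$ (the computation in the Proposition above) and, symmetrically, $\bar{\partial}_{z_v}\widehat{Z}(z_w - z_v) = -\Y$, one sees that $\bar{\partial}_{z_v}$ sends each generator $\widehat{Z}_{a,b}$ to $\Y$ if $a = v$, to $-\Y$ if $b = v$, and to $0$ otherwise. The key observation is that $\Y = -\pi/\op{Im}\tau$ depends only on $\tau$, so $\bar{\partial}_{z_v}\Y = 0$; thus $\Y$ is a nonzero, $\bar{\partial}_{z_v}$-constant element, and we may divide by it.

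Next, fix one generator $x := \widehat{Z}_{v,w_0}$ with $\bar{\partial}_{z_v} x = \Y$ (assuming $v$ has at least one other vertex; otherwise $R = \bbc$ and $\bar{\partial}_{z_v} = 0$ is trivially surjective). For every remaining generator $g$ involving $v$ I would replace $g$ by $g - x$ when $\bar{\partial}_{z_v} g = \Y$ and by $g + x$ when $\bar{\partial}_{z_v} g = -\Y$, leaving the generators not involving $v$ unchanged. This is an invertible linear substitution of the generators, hence a $\bbc$-algebra automorphism of $R$, and in the new coordinates $\bar{\partial}_{z_v}$ annihilates every generator except $x$ while $\bar{\partial}_{z_v} x = \Y$. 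Writing $R = R_0[x]$ with $R_0 \subseteq \ker \bar{\partial}_{z_v}$, the operator is exactly $\Y\, \partial_x$.

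Finally, surjectivity is the antiderivative formula. Given $f \in R$, expand $f = \sum_{k \geq 0} a_k x^k$ with $a_k \in R_0$ and set $g = \sum_{k \geq 0} \frac{a_k}{\Y(k+1)} x^{k+1}$. Since $\bar{\partial}_{z_v} a_k = 0 = \bar{\partial}_{z_v}\Y$, one computes $\bar{\partial}_{z_v} g = \sum_{k \geq 0} a_k x^k = f$, so $g \in R$ maps onto $f$. The only point requiring care, and hence the main obstacle, is the reduction step: one must check that the $\widehat{Z}_{a,b}$ are genuinely free polynomial generators (equivalently, that the displayed substitution is invertible and preserves $R$), and that $\Y^{-1}$ lies in the coefficient ring so that the antiderivative remains in $R$. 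Granting these, the remaining computation is routine.
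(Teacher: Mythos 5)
Your argument is correct, but it takes a genuinely different route from the paper's. The paper keeps the original generators $\widehat{Z}_{v,u}$ and, for a monomial $\widehat{Z}^{\,l}f$ with $\widehat{Z}=\widehat{Z}_{v,w}$, writes down an explicit antiderivative as a telescoping (iterated integration by parts) sum $\sum_{k\geq 1}\frac{(-1)^{k-1}\widehat{Z}^{\,l+k}}{(l+1)\cdots(l+k)}\bar\partial_{z_v}^{\,k-1}f$, which is a finite sum because $\bar\partial_{z_v}$ is nilpotent on $f$ (every generator has constant derivative, so each application of $\bar\partial_{z_v}$ lowers the degree). You instead normalize the operator first: an invertible linear substitution of generators (adding or subtracting a fixed $x=\widehat{Z}_{v,w_0}$) kills the derivative of every generator except $x$, exhibiting $R=R_0[x]$ with $R_0\subseteq\ker\bar\partial_{z_v}$ and the operator as $\Y\,\partial_x$, whose surjectivity is the one-variable antiderivative formula. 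Your version makes the kernel structure transparent and is arguably more careful about constants: the paper's displayed identity $\bar\partial_{z_v}g=\widehat{Z}^{\,l}f$ tacitly normalizes $\bar\partial_{z_v}\widehat{Z}_{v,u}$ to $1$ rather than $\Y$ (so its $g$ is off by an overall factor of $\Y$), whereas you track $\Y$ explicitly and isolate its invertibility as the needed hypothesis. The two caveats you flag are equally present in the paper's argument and are harmless: algebraic independence of the $\widehat{Z}_{a,b}$ is not actually required, since surjectivity established on the free polynomial algebra descends along the evaluation map onto the actual ring of functions (the derivation commutes with that map), and $\Y=-\pi/\op{Im}\tau$ is a nonzero scalar once $\tau$ is fixed, so $\Y^{-1}$ causes no difficulty. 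The paper's approach buys a closed formula for the preimage in the original generators, which is what its algorithm in Section 4 uses; yours buys a shorter and more conceptual proof of surjectivity itself.
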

	\begin{proof}
		We only need to prove this Lemma for all monomials. Fix an element $w \in E_v$, any monomial has the formula
		\[\widehat{Z}^{l} \cdot f \quad\text{where\ } \widehat{Z} = \widehat{Z}_{v,w} \text{\ and \ } f=\prod_{u \in E_v \atop u \neq w} Z_{v,u}^{l_{u}}.  \]
		In fact, let's consider the element 
		\[g=\sum_{k=1}^{\infty} \frac{(-1)^{k-1}\widehat{Z}^{l+k}}{(l+1) \cdots (l+k)}\bar\partial_{z_v}^{k-1}f,\]
		then 
		\[\bar\partial_{z_v}g = \widehat{Z}^{l}f + \sum_{k=1}^{\infty} \kuohao{\frac{(-1)^{k-1}\widehat{Z}^{l+k}}{(l+1) \cdots (l+k)}\bar\partial_{z_v}^{k}f + \frac{(-1)^{k}\widehat{Z}^{l+k}}{(l+1) \cdots (l+k)}\bar\partial_{z_v}^{k}f } = \widehat{Z}^lf.\]
		Moreover, since the degree of $f$ is finite, then $g$ is a finite sum, thus $g \in R$, which completes the proof.
	\end{proof}
	\begin{cor}\label{Graph delta bar inverse Cor}
		Let $v$ be a vertex in a decorated graph $(\Gamma,n)$. There exists an element $\gamma$ in $V(\widetilde{G})$, such that 
		\[\bar\delta_{v} \gamma = (\Gamma,n).\]
	\end{cor}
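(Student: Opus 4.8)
The plan is to obtain this as the graph-theoretic shadow of the surjectivity in Lemma \ref{delta bar inverse}, transported through the dictionary furnished by the preceding Proposition, and then to package the resulting antiderivative as an honest element of $V(\widetilde{G})$ by an induction that reproduces the telescoping in that lemma. First I would record the intertwining: by the Proposition, $\Phi_{\bar\delta_v(\Gamma,n)} = -\frac{\op{Im}\tau}{\pi}\,\bar\partial_{z_v}\Phi_{(\Gamma,n)}$, so under $\gamma \mapsto \Phi_\gamma$ the graph operator $\bar\delta_v$ corresponds to the normalized operator $-\frac{\op{Im}\tau}{\pi}\bar\partial_{z_v}$. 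Viewing $\Phi_{(\Gamma,n)}$ as a function of $z_v$ with the other variables frozen, the edges not meeting $v$ together with the factors $\partial^{n_e}\widehat{P}$ on the edges at $v$ having $n_e \geq 0$ are holomorphic in $z_v$, hence $\bar\partial_{z_v}$-closed and play the role of constants; only the decoration-$(-1)$ edges at $v$ contribute the generators $\widehat{Z}_{v,u}=\widehat{Z}(z_v-z_u)$ (up to the sign in $\partial^{-1}\widehat{P}=-\widehat{Z}$). Thus $\Phi_{(\Gamma,n)}$ lies in the ring $R$ of Lemma \ref{delta bar inverse}, and that lemma supplies an explicit $g \in R$ with $\bar\partial_{z_v} g = -\frac{\pi}{\op{Im}\tau}\Phi_{(\Gamma,n)}$, written as a finite combination of monomials in the $\widehat{Z}_{v,u}$ with frozen-background coefficients. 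Each such monomial is (up to sign) the $\Phi$ of a decorated graph built from the background by attaching parallel $-1$-edges at $v$, so $g=\Phi_\gamma$ for an explicit $\gamma$, giving $\Phi_{\bar\delta_v\gamma}=\Phi_{(\Gamma,n)}$.

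To upgrade this to the genuine graph-level identity $\bar\delta_v\gamma=(\Gamma,n)$ without invoking linear independence of the functions involved, I would instead construct $\gamma$ directly by induction on $d$, the number of decoration-$(-1)$ edges at $v$ pointing to vertices other than a fixed auxiliary vertex $w\neq v$. In the base case $d=0$, all $-1$-edges at $v$ (say $l$ of them) go to $w$, and I claim $\gamma=\frac{1}{l+1}(\Gamma',n')$ works, where $\Gamma'$ adds one further parallel $-1$-edge $v\to w$ oriented to match the existing ones: applying $\bar\delta_v$ deletes each of the $l+1$ identical parallel edges in turn, returning $l+1$ copies of $(\Gamma,n)$ with a uniform sign, so $\bar\delta_v\gamma=\pm(\Gamma,n)$ and the sign is absorbed into the coefficient. (The subcase $l=0$ recovers the naive construction $\gamma=(\Gamma,n)$ with one extra $-1$-edge appended.) For $d\geq 1$ the same $\gamma'$ yields $(\Gamma,n)$ plus correction terms, each obtained by deleting one $-1$-edge $v\to u$ with $u\neq w$; these corrections have strictly smaller $d$, so by the inductive hypothesis each admits a $\bar\delta_v$-preimage, and subtracting those preimages from $\gamma'$ produces the desired $\gamma$.

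Throughout this argument I would assume $\Gamma$ possesses a vertex $w\neq v$, which is the only situation in which the corollary is used (the algorithm of Theorem \ref{Thm2 in intro} invokes $\bar\delta^{-1}$ only to reduce the vertex count, terminating once a single vertex remains); in the degenerate single-vertex case $E_v=\emptyset$ and $\bar\delta_v$ carries no information, so no construction is required there.

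The main obstacle I anticipate is sign bookkeeping rather than anything conceptual: one must reconcile the $(-1)$ coming from $\partial^{-1}\widehat{P}=-\widehat{Z}$, the oddness $\widehat{Z}(-z)=-\widehat{Z}(z)$ that governs reversal of an edge's orientation, and the $\pm$ attached to $E_v^+$ versus $E_v^-$ in the definition of $\bar\delta_v$. These signs must line up so that the telescoping cancellation of Lemma \ref{delta bar inverse} — equivalently, the vanishing of the correction terms after the inductive subtraction — occurs with the correct coefficients. Once the orientations of the auxiliary edges are pinned down (for instance by always placing their heads at $v$), the remaining verification is routine.
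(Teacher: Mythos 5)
Your proposal is correct, and its first half is exactly the route the paper takes: the paper offers no written proof of this corollary, relying entirely on the explicit antiderivative $g=\sum_{k\geq 1} \frac{(-1)^{k-1}\widehat{Z}^{l+k}}{(l+1)\cdots(l+k)}\bar\partial_{z_v}^{k-1}f$ from Lemma \ref{delta bar inverse} together with the observation (your ``dictionary'') that each monomial in $\widehat{Z}_{v,u}$ times the frozen holomorphic background is $\Phi$ of a decorated graph obtained by attaching $(-1)$-edges at $v$. Where you go beyond the paper is in noticing that this only yields $\Phi_{\bar\delta_v\gamma}=\Phi_{(\Gamma,n)}$, which is strictly weaker than the asserted identity $\bar\delta_v\gamma=(\Gamma,n)$ in $V(\widetilde{G})$ unless one knows $\gamma\mapsto\Phi_\gamma$ is injective; your direct inductive construction closes this gap and is the more honest proof, and it also matches how the paper actually computes $\bar\delta_v^{-1}$ in the examples (append one extra $(-1)$-edge and correct). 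One caveat in your induction: the base case as stated assumes the $l$ pre-existing $(-1)$-edges from $v$ to $w$ all share an orientation. If $\Gamma$ already carries $(-1)$-edges between $v$ and $w$ in both directions (allowed in $\widetilde{G}$), then $\bar\delta_v$ applied to your $\Gamma'$ produces, besides $(l_++1)$ copies of $(\Gamma,n)$, a correction term $\Gamma''$ with the same value of $d$ but a different split $(l_++1,l_--1)$, so the induction on $d$ alone does not terminate; you need a secondary induction (e.g.\ on $l_-$, the number of $(-1)$-edges to $w$ oriented against your auxiliary edge) to absorb it. This is a repairable bookkeeping issue of precisely the kind you flag at the end, not a conceptual flaw, and with that refinement your argument is complete --- indeed more complete than the paper's.
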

	\begin{rmk}
		We will use the notation $\bar\delta_{v}^{-1}(\Gamma,n)$ to denote the element $\gamma$ in Corollary \ref{Graph delta bar inverse Cor}. Such $\gamma$ is not unique, but we only care about how to find one $\gamma$. And the method is given in Lemma \ref{delta bar inverse}.
	\end{rmk}
	
	\begin{cor} \label{Corollary in Section 4}
		Let $(\Gamma,n)$ be a decorated graph, and let $v$ be a vertex in $\Gamma$. Then
		\[\Xint-_{E_{\tau}} \Phi_{(\Gamma,n)} \vol{v} = \op{Res} \kuohao{\Phi_{\bar\delta^{-1}_{v}(\Gamma,n)}dz_v}.\]
	\end{cor}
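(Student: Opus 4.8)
The plan is to derive this as an immediate consequence of the differential--residue identity established just above (the Corollary reading $\Xint-_{E_\tau}\Phi_{\bar\delta_v(\Gamma,n)}\vol{v} = \op{Res}(\Phi_{(\Gamma,n)}dz_v)$) together with the surjectivity packaged in Corollary \ref{Graph delta bar inverse Cor}. The whole argument is a matter of choosing a preimage under $\bar\delta_v$ and feeding it into the identity we already have.

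First I would invoke Corollary \ref{Graph delta bar inverse Cor} to produce an element $\gamma := \bar\delta_v^{-1}(\Gamma,n) \in V(\widetilde{G})$ satisfying $\bar\delta_v\gamma = (\Gamma,n)$. In general $\gamma$ is a $\bbc$-linear combination of decorated graphs carrying the decoration $-1$, so the next step is to observe that the preceding Corollary extends verbatim from a single decorated graph to all of $V(\widetilde{G})$. Indeed, $\gamma \mapsto \Phi_\gamma$ is $\bbc$-linear by definition, and both $\Xint-_{E_\tau}(-)\vol{v}$ and $\op{Res}((-)\,dz_v)$ are $\bbc$-linear; applying the identity to each decorated graph appearing in $\gamma$ and summing gives, for every $\gamma \in V(\widetilde{G})$,
\[\Xint-_{E_\tau}\Phi_{\bar\delta_v\gamma}\vol{v} = \op{Res}\kuohao{\Phi_\gamma\, dz_v}.\]
Specializing to $\gamma = \bar\delta_v^{-1}(\Gamma,n)$ and using $\bar\delta_v\gamma = (\Gamma,n)$ on the left, so that $\Phi_{\bar\delta_v\gamma} = \Phi_{(\Gamma,n)}$, while $\Phi_\gamma = \Phi_{\bar\delta_v^{-1}(\Gamma,n)}$ on the right, yields exactly
\[\Xint-_{E_\tau}\Phi_{(\Gamma,n)}\vol{v} = \op{Res}\kuohao{\Phi_{\bar\delta_v^{-1}(\Gamma,n)}\,dz_v},\]
as claimed.

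There is no genuine obstacle here; the mathematical content was already absorbed into the two preceding results. The one point deserving a word of care is that $\bar\delta_v^{-1}(\Gamma,n)$ is only defined up to the kernel of $\bar\delta_v$, so one should confirm the right-hand side is independent of the chosen preimage $\gamma$. This is forced by the displayed identity itself: any two preimages differ by some $\gamma_0$ with $\bar\delta_v\gamma_0 = 0$, and for such $\gamma_0$ we get $\op{Res}(\Phi_{\gamma_0}\,dz_v) = \Xint-_{E_\tau}\Phi_{\bar\delta_v\gamma_0}\vol{v} = 0$. Hence the value of $\op{Res}(\Phi_{\bar\delta_v^{-1}(\Gamma,n)}\,dz_v)$ depends only on $(\Gamma,n)$, consistent with the left-hand side and making the statement well posed.
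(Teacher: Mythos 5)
Your proposal is correct and follows exactly the route the paper intends: the corollary is an immediate consequence of the preceding identity $\Xint-_{E_\tau}\Phi_{\bar\delta_v(\Gamma,n)}\vol{v} = \op{Res}(\Phi_{(\Gamma,n)}dz_v)$, extended by linearity to $V(\widetilde{G})$ and applied to a preimage $\gamma = \bar\delta_v^{-1}(\Gamma,n)$ furnished by Corollary \ref{Graph delta bar inverse Cor}. Your additional observation that the right-hand side is independent of the chosen preimage is a worthwhile point the paper leaves implicit.
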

	
	\begin{lem} \label{0-shifted residual graph lemma}
		Let $(\Gamma,n)$ be a decorated graph, then
		\[\op{Res}(\Psi_{(\Gamma,n)}dz_v) = \sum_{w \in N(v)}\Psi_{\op{Res}_{w}^{(v)}[0](\Gamma,n)},\]
		where $N(v) = \{ w \in V(\Gamma) \mid E_{v,w} \neq \emptyset \}$.
	\end{lem}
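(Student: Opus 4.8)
The plan is to localize the residue in $z_v$ and then reuse, essentially verbatim, the local computation already carried out in Lemma \ref{Summation of residual graph terms}. First I would observe that, viewed as a function of $z_v$, the loop factors $\prod_{e \in L(\Gamma)} W_{n_e}$ occurring in $\Psi_{(\Gamma,n)}$ are constants, and that $z_v$ appears in $\Phi_{(\Gamma,n)}$ only through the propagators on edges $e \in E_v$ (recall $h(e) \neq t(e)$, so edges are never self-loops and the second endpoint $u$ of such an $e$ satisfies $u \in N(v)$). By Corollary \ref{loop formula of propagator} each such propagator $\pds{z_{h(e)}}{n_e}\widehat{P}(z_{h(e)}-z_{t(e)})$ is holomorphic in $z_v$ away from the locus where $z_v$ meets its other endpoint modulo $\Lambda_\tau$. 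Hence on $E_\tau$ the only poles of $\Psi_{(\Gamma,n)}dz_v$ sit at $z_v = z_w$ with $w \in N(v)$, so the total residue splits as
\[\op{Res}(\Psi_{(\Gamma,n)}dz_v) = \sum_{w \in N(v)} \op{Res}_{z_v = z_w}(\Psi_{(\Gamma,n)}dz_v).\]

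Next I would fix $w \in N(v)$ and analyze the local residue $\op{Res}_{z_v = z_w}(\Psi_{(\Gamma,n)}dz_v)$. Splitting $E_v$ into the edges $E_{v,w}$ joining $v$ to $w$ and the remaining edges joining $v$ to other vertices places us exactly in the situation of Figure \ref{figure 1}. Setting $z = z_v - z_w$ and expanding each propagator on an edge of $E_{v,w}$ via Corollary \ref{loop formula of propagator}, the bracket-expansion from the proof of Lemma \ref{Summation of residual graph terms} applies without change: a nonempty subset $A \subset E_{v,w}$ contributes its singular parts $(-1)^{n_e}(n_e+1)!/z^{n_e+2}$, the remaining edges of $E_{v,w}$ contribute their regular parts and thereby produce the loop factors $W_{n_e + j_e}$, and the residue is extracted by Proposition \ref{Properties of regularized integral} together with Leibniz's rule, distributing the surviving derivatives onto $\prod_i \pds{z}{m_i}\widehat{P}(z + z_w - z_{t(e_i)})$.

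The sole difference from Lemma \ref{Summation of residual graph terms} is the absence of the prefactor $z_v - z_w$: this raises the relevant pole order by one, so the order of the differential operator that survives is larger by one, and the assignment $u \in P(E_v \setminus A, \omega(A)-1)$ now ranges over total weight $\omega(A)-1$ rather than $\omega(A)-2$. By Definition \ref{Definition of k-shifted residual graph} this is precisely the defining condition of the $0$-shifted residual graph in place of the $1$-shifted one. Collecting all subsets $A$ and assignments $u$, reading the non-collapsed edges of $E_{v,w}$ as loops of the quotient graph $\Gamma/A$ (so that their $W$-factors are exactly the loop contributions entering $\Psi_{(\Gamma/A,\,n+u)}$), and matching the signs with the collapse coefficient $C_{A,u}$ through $c_{A,u}$, I obtain $\op{Res}_{z_v = z_w}(\Psi_{(\Gamma,n)}dz_v) = \Psi_{\op{Res}_{w}^{(v)}[0](\Gamma,n)}$. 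Summing over $w \in N(v)$ then yields the stated identity.

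I expect the only genuine obstacle to be confirming that the combinatorial coefficients and the orientation-dependent signs coincide with $C_{A,u}$ and $c_{A,u}$; but this is the identical bookkeeping already validated in Lemma \ref{Summation of residual graph terms}, carried out there under the harmless assumption that $v$ is the head of the relevant edges and recorded in full generality through $c_{A,u}$. Thus the heavy computation is inherited, and the genuinely new content reduces to the pole-localization of the first paragraph and the single index shift $\omega(A)-2 \mapsto \omega(A)-1$ that converts the $1$-shift into the $0$-shift.
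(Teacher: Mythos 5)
Your proposal is correct and follows essentially the same route as the paper, whose proof simply states that the argument is the same as for Lemma \ref{Second part of differential}, i.e.\ a reduction to the local computation of Lemma \ref{Summation of residual graph terms}. You supply exactly that reduction, including the correct observation that dropping the prefactor $z_v-z_w$ shifts the assignment weight from $\omega(A)-2$ to $\omega(A)-1$ and hence converts the $1$-shifted residual graph into the $0$-shifted one.
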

	\begin{proof}
		The proof of this Lemma is same as the proof of Lemma \ref{Second part of differential}.
	\end{proof}
	
	\begin{cor}
		Let $(\Gamma,n)$ be a decorated graph, and let $v$ be a vertex in $\Gamma$. Then
		\[\Xint-_{E_{\tau}} \Psi_{(\Gamma,n)} \vol{v} = \sum_{w \in N(v)}\Psi_{\op{Res}_w^{(v)}[0] \bar(\delta_{v}^{-1}(\Gamma,n))}.\]
	\end{cor}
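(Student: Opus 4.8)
The plan is to deduce the statement directly from the two results just established, namely Corollary \ref{Corollary in Section 4} governing the $\Phi$-integral and Lemma \ref{0-shifted residual graph lemma} governing the $\Psi$-residue, by threading them through the defining factorization $\Psi_{(\Gamma,n)} = \Phi_{(\Gamma,n)} \prod_{e \in L(\Gamma)} W_{n_e}$ from Definition \ref{Definition of R.I}. The whole argument is a chain of identities; no new analysis is required once the loop factor is handled correctly.

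First I would note that the loop contribution $\prod_{e\in L(\Gamma)} W_{n_e}$ is a function of $\tau$ alone and carries no dependence on $z_v$, so it pulls out of the regularized integral over $z_v$:
\[\Xint-_{E_\tau} \Psi_{(\Gamma,n)}\vol{v} = \prod_{e\in L(\Gamma)} W_{n_e} \cdot \Xint-_{E_\tau} \Phi_{(\Gamma,n)}\vol{v}.\]
Applying Corollary \ref{Corollary in Section 4} to the inner integral rewrites it as $\op{Res}(\Phi_{\bar\delta_v^{-1}(\Gamma,n)}dz_v)$. I would then reassemble the loop factor: since $\bar\delta_v^{-1}$ acts only on the edge-propagators incident to $v$ (introducing decoration $-1$) and never alters the loop set $L(\Gamma)$ or its decorations, every graph occurring in the linear combination $\bar\delta_v^{-1}(\Gamma,n)$ carries the same loops as $(\Gamma,n)$. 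Hence $\prod_{e\in L} W_{n_e}\cdot \Phi_{\bar\delta_v^{-1}(\Gamma,n)} = \Psi_{\bar\delta_v^{-1}(\Gamma,n)}$, and the loop factor moves back inside the residue, giving $\Xint-_{E_\tau}\Psi_{(\Gamma,n)}\vol{v} = \op{Res}(\Psi_{\bar\delta_v^{-1}(\Gamma,n)}dz_v)$. Finally, Lemma \ref{0-shifted residual graph lemma}, extended $\bbc$-linearly over the element $\bar\delta_v^{-1}(\Gamma,n)\in V(\widetilde G)$, converts this residue into $\sum_{w\in N(v)}\Psi_{\op{Res}_w^{(v)}[0](\bar\delta_v^{-1}(\Gamma,n))}$, which is the claim.

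The two factoring-out steps are routine; the one point that demands care is the bookkeeping at the interface of the two cited results. I would need to confirm that $\Psi$ and $\Phi$ transform compatibly under $\bar\delta_v^{-1}$ — that is, that the operator is defined purely on the edge-propagators and leaves the loop weights $W_{n_e}$ inert — and that applying Lemma \ref{0-shifted residual graph lemma} term by term to the generally non-trivial combination $\bar\delta_v^{-1}(\Gamma,n)$ yields a well-defined sum. Here the vanishing of the residual graph whenever $E_{v,w}=\emptyset$ ensures that restricting the outer sum to the neighborhood $N(v)$ discards nothing, and that the construction of $\bar\delta_v^{-1}$ in Lemma \ref{delta bar inverse} only raises powers of $\widehat{Z}_{v,w}$ along existing neighbors keeps $N(v)$ unchanged.
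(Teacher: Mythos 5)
Your proposal is correct and follows exactly the route the paper takes: the paper's own proof is the one-line citation ``It follows from Lemma \ref{0-shifted residual graph lemma} and Corollary \ref{Corollary in Section 4},'' and your argument simply fills in the routine bookkeeping (factoring out the $z_v$-independent loop weights and applying the lemma linearly over $\bar\delta_v^{-1}(\Gamma,n)$) that the paper leaves implicit.
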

	\begin{proof}
		It follows from Lemma \ref{0-shifted residual graph lemma} and Corollary \ref{Corollary in Section 4}.
	\end{proof}
	
	\begin{thm} \label{Main Thm 2}
		Let $(\Gamma,n)$ be a decorated graph, and let $v$ be a vertex in $\Gamma$. Then
		\[W(\Gamma,n) = \sum_{w \in N(v)} W(\op{Res}_{w}^{(v)}[0](\bar\delta^{-1}(\Gamma,n))).\]
	\end{thm}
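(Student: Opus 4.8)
The plan is to deduce the full $m$-fold identity from the single-variable integration formula already established in the Corollary immediately preceding the theorem, combined with the iterated-integral property of the regularized integral. Write $m = |V(\Gamma)|$, and abbreviate $\bar\delta^{-1} = \bar\delta_v^{-1}$ for the chosen vertex $v$.

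First I would invoke part $4$ of Proposition \ref{Properties of regularized integral}: since the iterated regularized integral does not depend on the order of integration, I may integrate out the variable $z_v$ attached to $v$ before the remaining $m-1$ variables, giving
\[W(\Gamma,n) = \Xint-_{E_\tau^m} \Psi_{(\Gamma,n)} \op{vol} = \Xint-_{E_\tau^{m-1}} \kuohao{\Xint-_{E_\tau} \Psi_{(\Gamma,n)}\vol{v}} \prod_{u \neq v}\vol{u}.\]
Next I would substitute the preceding Corollary into the inner integral, obtaining
\[\Xint-_{E_\tau}\Psi_{(\Gamma,n)}\vol{v} = \sum_{w \in N(v)} \Psi_{\op{Res}_w^{(v)}[0](\bar\delta_v^{-1}(\Gamma,n))}.\]
The key structural observation is that each graph $\op{Res}_w^{(v)}[0](\bar\delta_v^{-1}(\Gamma,n))$ has $v$ collapsed onto $w$, hence carries only $m-1$ vertices, and the associated function $\Psi_{\op{Res}_w^{(v)}[0](\bar\delta_v^{-1}(\Gamma,n))}$ no longer involves $z_v$. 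Therefore, applying linearity of the regularized integral and the definition of $W$ (Definition \ref{Definition of R.I}) term by term, the outer $(m-1)$-fold integral is exactly
\[\Xint-_{E_\tau^{m-1}} \Psi_{\op{Res}_w^{(v)}[0](\bar\delta_v^{-1}(\Gamma,n))} \prod_{u\neq v}\vol{u} = W\kuohao{\op{Res}_w^{(v)}[0](\bar\delta_v^{-1}(\Gamma,n))},\]
and summing over $w \in N(v)$ yields the claimed identity.

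The analytic content has already been isolated upstream, so the remaining work is bookkeeping. The delicate points are (i) confirming that $\Psi$ of the residual graph is genuinely $z_v$-free, so that the Fubini-type splitting is legitimate and the outer integral really is an $(m-1)$-variable regularized integral of the type computed by $W$; and (ii) tracking the loop factors $\prod_{e\in L}W_{n_e}$ together with the decoration-$(-1)$ edges introduced by $\bar\delta_v^{-1}$ (via Corollary \ref{Graph delta bar inverse Cor} and Lemma \ref{delta bar inverse}) through the collapse defining the $0$-shifted residual graph, so that the output lands in the correct space $V(G)$ of decorated graphs and its $\Psi$ matches. I expect (ii) to be the main obstacle, but both reduce to unwinding the definitions, since the single-variable residue computation underlying the inner integral is exactly Lemma \ref{0-shifted residual graph lemma} and Corollary \ref{Corollary in Section 4}.
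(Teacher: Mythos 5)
Your proposal is correct and follows the paper's intended route: the paper's own proof is just ``Obvious,'' resting on the immediately preceding Corollary (which evaluates $\Xint-_{E_\tau}\Psi_{(\Gamma,n)}\vol{v}$ as $\sum_{w\in N(v)}\Psi_{\op{Res}_w^{(v)}[0](\bar\delta_v^{-1}(\Gamma,n))}$ via Lemma \ref{0-shifted residual graph lemma} and Corollary \ref{Corollary in Section 4}) together with the order-independence of iterated regularized integrals from Proposition \ref{Properties of regularized integral}. You supply exactly these steps, plus the bookkeeping the paper leaves implicit.
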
 
	\begin{proof}
		Obvious.
	\end{proof}
	
	We can use Theorem \ref{Main Thm 2} to compute the graph integral. Here are some examples:
	
	\begin{eg}
	Let us compute some graph integral 
		\[\begin{tikzpicture}
			\node[point] (a) at (-1.5,0) [label = below:{$v$}]{};
			\node[point] (b) at (1.5,0) [label = below:{$w$}]{}; 
			\path[-latex] (a) edge [in = 135, out=45] node [above] {0} (b);
			\path[-latex] (a) edge [in = -135, out=-45] node [below] {0} (b);
			\node[] (c) at (2,0) [label = above:{$\bar\delta^{-1}_{v}$}] {$\longrightarrow$};
		\end{tikzpicture} 
		\begin{tikzpicture}
			\node[] (c) at (-2,0) {\LARGE{$-$}};
			\node[point] (a) at (-1.5,0) [label = below:{$v$}]{};
			\node[point] (b) at (1.5,0) [label = below:{$w$}]{}; 
			\path[-latex] (a) edge [in = 135, out=45] node [above] {-1} (b);
			\path[-latex] (a) edge [in = -135, out=-45] node [below] {0} (b);
			\path[-latex] (a) edge[] node[below] {0} (b);
		\end{tikzpicture},
		\]
		and the $0$-residual graph of $-\bar\delta^{-1}_{v}(\Gamma,n)$ is 		
		\[\begin{tikzpicture}
			\node[point] (a) at (0,0) [label = below:{$w$}] {};
			\node[] (coe) at (-0.8,0) {\Large{$-\frac{0!}{0!0!}$}};
			\path[-latex] (a) edge[reflexive = 25, in = -45, out = 45] (a);
			\path[-latex] (a) edge[reflexive = 35, in = -55, out = 55] (a);
			\node[] (wei3) at (0.9,0) {\tiny{$0$}};
			\node[] (wei2) at (0.7,0) {\tiny{$0$}};
		\end{tikzpicture}
		\begin{tikzpicture}
			\node[] (coe2) at (-1,0) {\Large{$+\binom{2}{1}\frac{1!}{1!0!}$}};
			\node[point] (a) at (0,0) [label = below:{$w$}] {};
			\path[-latex] (a) edge[reflexive = 25, in = -45, out = 45] (a);
			\path[-latex] (a) edge[reflexive = 35, in = -55, out = 55] (a);
			\node[] (wei3) at (0.9,0) {\tiny{$0$}};
			\node[] (wei2) at (0.7,0) {\tiny{$0$}};
		\end{tikzpicture}
		\begin{tikzpicture}
			\node[point] (a) at (0,0) [label = below:{$w$}] {}; 
			\path[-latex] (a) edge[reflexive = 35, in = -55, out = 55]  (a);
			\node[] (coe3) at (-1,0) {\Large{$-\binom{2}{1} \frac{0!1!}{2!}$}};
			\node[] (wei3) at (0.9,0) {\tiny{$2$}};
		\end{tikzpicture}
		\begin{tikzpicture}
			\node[point] (a) at (0,0) [label = below:{$w$}] {}; 
			\path[-latex] (a) edge[reflexive = 35, in = -55, out = 55]  (a);
			\node[] (wei3) at (0.9,0) {\tiny{$2$}};
			\node[] (coe) at (-0.8,0) {\Large{$+\frac{1!1!}{3!}$}};
		\end{tikzpicture}
		\]
		
		The first term is obtained by collapsing the edge with decoration $-1$, with coefficient 
		\[(-1)^{-1}\frac{(-1+1)!}{u_1! \cdot u_2!},\]
		where $u_1+u_2 = (-1+2)-1 = 0$. So $u_1=u_2=0$.
		
		The second term is obtained by collapsing the edge with decoration $0$, each quotient graph has coefficient
		\[(-1)^0\frac{(0+1)!}{u_1!u_2!},\]
		where $u_1 + u_2 = (0+2)-1 = 1$. Since $W_k = 0$ for $2 \nmid k$, then we should assign $1$ to the edge with decoration $-1$, and assign $0$ to the edge with decoration $0$.
		
		The third term is obtained by collapsing two edges with decoration $-1$ and $1$ separately, with coefficient
		\[(-1)^{-1+0}\frac{(-1+1)!(0+1)!}{((-1+2)+(0+2)-1)!},\]
		and we assign $2$ to the remaining edge.
		
		And the last term is obtained by collapsing two edges with decoration $1$ and $1$ separately, with the coefficient 
		\[(-1)^{0+0} \frac{(0+1)!(0+1)!}{((0+2)+(0+2)-1)!},\]
		and we assign $3$ to the remaining edge.
		
		Therefore, by Example \ref{The value of Wk's}, one has
		\[W(\Gamma,n) = -W_0^2 + \frac{5}{6}W_2 = -\kuohao{\frac{\pi^2}{3}\widehat{E}_2}^2 + \frac{5}{6} \cdot 3! \cdot \frac{\pi^4}{45}E_4 = \pi^4\frac{-\widehat{E}_2^2+E_4}{9}.\]
	\end{eg}
	
	\begin{eg}
		\[\begin{tikzpicture}
			\node[point] (a) at (-1.5,0) [label = below:{$v$}]{};
			\node[point] (b) at (1.5,0) [label = below:{$w$}]{}; 
			\path[-latex] (a) edge [in = 135, out=45] node [above] {0} (b);
			\path[-latex] (a) edge [in = -135, out=-45] node [below] {0} (b);
			\path[-latex] (a) edge[] node [below] {0} (b);
			\node[] (c) at (2,0) [label = above:{$\bar\delta^{-1}_{v}$}] {$\longrightarrow$};
		\end{tikzpicture} 
		\begin{tikzpicture}
			\node[] (c) at (-2,0) {\LARGE{$-$}};
			\node[point] (a) at (-1.5,0) [label = below:{$v$}]{};
			\node[point] (b) at (1.5,0) [label = below:{$w$}]{}; 
			\path[-latex] (a) edge [in = 105, out=75] node [above] {-1} (b);
			\path[-latex] (a) edge [in = 135, out=45] node [below] {0} (b);
			\path[-latex] (a) edge [in = -135, out=-45] node [below] {0} (b);
			\path[-latex] (a) edge[] node[below] {0} (b);
		\end{tikzpicture}
		\]
		and the $0$-residual graph of $-\bar\delta^{-1}_{v}(\Gamma,n)$ is 
		\[
			\begin{tikzpicture}
				\node[point] (a) at (0,0) [label = below:{$w$}]{};
				\path[-latex] (a) edge[reflexive = 15, in=-45, out=45] (a);
				\node[] (decoration1) at (0.5,0) {\tiny{$0$}};
				\path[-latex] (a) edge[reflexive = 30, in=-50, out=50] (a);
				\node[] (decoration1) at (0.8,0) {\tiny{$0$}};
				\path[-latex] (a) edge[reflexive = 45, in=-55, out=55] (a);
				\node[] (decoration1) at (1.1,0) {\tiny{$0$}};
				\node[] (coe) at (-1.8,0) {$\kuohao{-\frac{0!}{0!0!0!}+\binom{3}{1}\frac{1!}{1!0!0!}}$};
			\end{tikzpicture}	
			\begin{tikzpicture}
				\node[point] (a) at (0,0) [label = below:{$w$}]{};
				\path[-latex] (a) edge[reflexive = 30, in=-50, out=50] (a);
				\node[] (decoration1) at (0.8,0) {\tiny{$0$}};
				\path[-latex] (a) edge[reflexive = 45, in=-55, out=55] (a);
				\node[] (decoration1) at (1.1,0) {\tiny{$2$}};
				\node[] (coe) at (-3,0) {$+\kuohao{-\binom{3}{1} \binom{2}{1}\frac{0!1!}{2!0!} + \binom{3}{2}\frac{1!1!}{3!0!}+ \binom{3}{2}\frac{1!1!}{1!2!}}$};
			\end{tikzpicture}
			\begin{tikzpicture}
				\node[point] (a) at (0,0) [label = below:{$w$}]{};
				\path[-latex] (a) edge[reflexive = 45, in=-55, out=55] (a);
				\node[] (decoration1) at (1.1,0) {\tiny{$4$}};
				\node[] (coe) at (-2,0) {$+\kuohao{-\binom{3}{2}\frac{0!1!1!}{4!}+ \frac{1!1!1!}{5!}}$};
			\end{tikzpicture}
		\]
		Therefore, 
		\[W(\Gamma,n) = -2W_0^3 + W_0W_2 + \frac{7}{60}W_4 = \pi^6 \frac{-10\widehat{E}_2^3 + 6\widehat{E}_2E_4 + 4E_6}{27 \cdot 5}\]
	\end{eg}
	
	\begin{eg}
		\[\begin{tikzpicture}
			\node[point] (a) at (-1.5,0) [label = below:{$v$}]{};
			\node[point] (b) at (1.5,0) [label = below:{$w$}]{}; 
			\path[-latex] (a) edge [in = 135, out=45] node [above] {0} (b);
			\path[-latex] (a) edge [in = -135, out=-45] node [below] {2} (b);
			\node[] (c) at (2,0) [label = above:{$\bar\delta^{-1}_{v}$}] {$\longrightarrow$};
		\end{tikzpicture} 
		\begin{tikzpicture}
			\node[] (c) at (-2,0) {\LARGE{$-$}};
			\node[point] (a) at (-1.5,0) [label = below:{$v$}]{};
			\node[point] (b) at (1.5,0) [label = below:{$w$}]{}; 
			\path[-latex] (a) edge [in = 135, out=45] node [above] {-1} (b);
			\path[-latex] (a) edge [in = -135, out=-45] node [below] {2} (b);
			\path[-latex] (a) edge[] node[below] {0} (b);
		\end{tikzpicture},
		\]
		and the $0$-residual graph of $-\bar\delta^{-1}_{v}(\Gamma,n)$ is 		
		\[\begin{tikzpicture}
			\node[point] (a) at (0,0) [label = below:{$w$}] {};
			\node[] (coe) at (-2.7,0) {\Large{$\left(-\frac{0!}{0!0!}+\frac{1!}{1!0!}+\frac{3!}{1!2!}+\frac{3!}{3!0!}\right)$}};
			\path[-latex] (a) edge[reflexive = 25, in = -45, out = 45] (a);
			\path[-latex] (a) edge[reflexive = 35, in = -55, out = 55] (a);
			\node[] (wei3) at (0.9,0) {\tiny{$2$}};
			\node[] (wei2) at (0.7,0) {\tiny{$0$}};
		\end{tikzpicture}
		\begin{tikzpicture}
			\node[point] (a) at (0,0) [label = below:{$w$}] {}; 
			\path[-latex] (a) edge[reflexive = 35, in = -55, out = 55]  (a);
			\node[] (coe3) at (-2.3,0) {\Large{$+\left(-\frac{0!1!}{2!}-\frac{1!3!}{4!}+\frac{1!3!}{5!}\right)$}};
			\node[] (wei3) at (0.9,0) {\tiny{$4$}};
		\end{tikzpicture}
		\]
		Therefore,
		\[W(\Gamma,n) = -4W_0W_2 + \frac{7}{10}W_4 = \pi^6 \frac{-8\widehat{E}_2E_4+8E_6}{45}.\]
	\end{eg}
	
	\begin{eg}
		\[\begin{tikzpicture}
			\node[point] (a) at (-1.5,0) [label = below:{$v$}]{};
			\node[point] (b) at (1.5,0) [label = below:{$w_1$}]{};
			\node[point] (c) at (0,2.6) [label = above:{$w_2$}]{};
			\path[-latex] (a) edge[] node[below] {0} (b);
			\path[-latex] (b) edge[] node[right] {0} (c);
			\path[-latex] (c) edge[] node[left] {0} (a);
			\node[] (arrow) at (2.25,1.3) [label = above:{$\op{\bar\delta_{v}^{-1}}$}] {$\longrightarrow$};
		\end{tikzpicture}
		\begin{tikzpicture}
			\node[] at (-2,1.3) {$-$};
			\node[point] (a) at (-1.5,0) [label = below:{$v$}]{};
			\node[point] (b) at (1.5,0) [label = below:{$w_1$}]{};
			\node[point] (c) at (0,2.6) [label = above:{$w_2$}]{};
			\path[-latex] (a) edge[] node[below] {0} (b);
			\path[-latex] (a) edge[in = 135, out = 45] node[above] {-1} (b);
			\path[-latex] (b) edge[] node[right] {0} (c);
			\path[-latex] (c) edge[] node[left] {0} (a);
		\end{tikzpicture}\]
		The residual graph of $-\bar\delta^{-1}_{v}(\Gamma,n)$ is 
		\[\begin{tikzpicture}
		% The first graph and it's coefficient
		\node[point] (a) at (-0.5,0) [label = below:{$w_2$}] {};
		\node[point] (b) at (0.5,0) [label = below:{$w_1$}] {};
		\path[-latex] (a) edge[out = 45, in = 135] node[above] {\tiny{$0$}} (b);
		\path[-latex] (a) edge[] (b);
		\node[] (decoration) at (0,-0.1) {\tiny{$0$}};
		\path[-latex] (a) edge[out = -45, in = -135] node[below] {\tiny{$0$}} (b);
		\node[] (coe1) at (-1,0) {\Large{$\frac{1!}{1!0!}$}};
		% The second graph and it's coefficient
		\node[] (coe0) at (1.2,0) {\Large{$+\frac{0!}{0!1!}$}};
		\node[point] (a0) at (2,0) [label = below:{$w_2$}] {};
		\node[point] (b0) at (3,0) [label = below:{$w_1$}] {};
		\path[-latex] (a0) edge[out = 45, in = 135] node[above] {\tiny{$0$}} (b0);
		\path[-latex] (a0) edge[] (b0);
		\node[] (decoration) at (2.5,-0.1) {\tiny{$-1$}};
		\path[-latex] (a0) edge[out = -45, in = -135] node[below] {\tiny{$1$}} (b0);
		% The third graph and it's coefficient
		\node[point] (a2) at (6.5,0) [label = below:{$w_2$}] {};
		\node[point] (b2) at (7.5,0) [label = below:{$w_1$}] {};
		\path[-latex] (a2) edge[out = 45, in = 135] node[above] {\tiny{$0$}} (b2);
		\path[-latex] (a2) edge[out = -45, in = -135] node[below] {\tiny{$0$}} (b2);
		\path[-latex] (b2) edge[reflexive = 15, out = 45, in = -45] (b2);
		\node[] (decoration3) at (8,0) {\tiny{$0$}};
		\node[] (coe2) at (4.8,0) {\Large{$+\kuohao{\frac{1!}{1!0!}-\frac{0!}{0!0!}}$}};
		% The last graph and it's coefficient
		\node[] (coe3) at (8.7,0) {\Large{$-\frac{0!1!}{2!}$}};
		\node[point] (a3) at (9.5,0) [label = below:{$w_2$}] {};
		\node[point] (b3) at (10.5,0) [label = below:{$w_1$}] {};
		\path[-latex] (a3) edge[out = 45, in = 135] (b3);
		\node[] (decoration31) at (10,0.4) {\tiny{$2$}};
		\path[-latex] (a3) edge[out = -45, in = -135] (b3);
		\node[] (decoration32) at (10,-0.4) {\tiny{$0$}};
	\end{tikzpicture}\]
	Therefore,
	\begin{align*}
		W(\Gamma,n) &= -\frac{1}{2}\left(-2W_0^3+W_0W_2+\frac{7}{60}W_4\right) + \frac{1}{2}\left(-4W_0W_2+\frac{7}{10}W_4\right)\\
		&=W_0^3-\frac{5}{2}W_0W_2+\frac{7}{24}W_4\\
		&= \pi^6\frac{\widehat{E}_2^3 - 3\widehat{E}_2E_4+2E_6}{3^3}.
	\end{align*}
	Moreover, we have
	\[\partial_{\Y}W(\Gamma,n) = \pi^4\frac{\widehat{E}_2^2-E_4}{3} = W(\delta(\Gamma,n)).\]
	\end{eg}
	
	\begin{eg}
		In general, consider the nonlinear system 
		\[\begin{tikzpicture}
			\node[point] (a) at (-1.5,0) [label = below:{$v$}]{};
			\node[point] (b) at (1.5,0) [label = below:{$w$}]{}; 
			\path[-latex] (a) edge [in = 135, out=45] node [above] {$n_1$} (b);
			\path[-latex] (a) edge [in = -135, out=-45] node [below] {$n_k$} (b);
			\node[] (vdot) at (0,0) {$\vdots$};
			\node[] (c) at (2,0) [label = above:{$\bar\delta^{-1}_{v}$}] {$\longrightarrow$};
		\end{tikzpicture} 
		\begin{tikzpicture}
			\node[] (c) at (-2,0) {\LARGE{$-$}};
			\node[point] (a) at (-1.5,0) [label = below:{$v$}]{};
			\node[point] (b) at (1.5,0) [label = below:{$w$}]{}; 
			\path[-latex] (a) edge [in = 105, out=75] node [above] {$-1$} (b);
			\path[-latex] (a) edge [in = 135, out=45] node [below] {$n_1$} (b);
			\node[] (vdot) at (0,0) {$\vdots$};
			\path[-latex] (a) edge [in = -135, out=-45] node [below] {$n_k$} (b);
		\end{tikzpicture}
		\]
		Let $E$ be the set of edges in $\bar\delta^{-1}_{v}(\Gamma,n)$, then
		\[W(\Gamma,n) = -\sum_{A \subset E \atop A \neq \emptyset , A \neq E} (-1)^{\omega(A)} \sum_{u \in P(E \setminus A, \omega(A)-1)} \frac{\prod_{e \in A}(n_e+1)!}{\prod_{e \in E \setminus A} u_e!} \prod_{e \in E \setminus A} W_{n_e+u_e}.\]
	\end{eg}

	\bibliographystyle{alpha}
	\bibliography{Reference.bib}
	
	\noindent{\small Department of Mathematical Science, Tsinghua University, Beijing 100084, P. R. China}
	
	\noindent{\small Email: \tt yangxx19mails.tsinghua.edu.cn}

\end{document}